\newtheorem*{assumption*}{\assumptionnumber}
\providecommand{\assumptionnumber}{}
\newenvironment{assumption}[2]
 {%
  \renewcommand{\assumptionnumber}{Assumption #1#2}%
  \begin{assumption*}%
  \protected@edef\@currentlabel{#1#2}%
 }
 {%
  \end{assumption*}
 }
\newtheorem{theorem}{Theorem}
\title{On Flexible Inverse Probability of Treatment and Intensity Weighting: Informative Censoring, Variable Inclusion, and Weight Trimming}
\author{
  Grace Tompkins \thanks{Corresponding author} \\
  Department of Statistics and Actuarial Sciences\\
  University of Waterloo \\
  Waterloo, ON\\
  \texttt{grace.tompkins@uwaterloo.ca} \\
   \And
  Joel A. Dubin \\
  Department of Statistics and Actuarial Sciences\\
  University of Waterloo \\
  Waterloo, ON\\
  \texttt{jdubin@uwaterloo.ca} \\
  \And
  Michael Wallace \\
  Department of Statistics and Actuarial Sciences\\
  University of Waterloo \\
  Waterloo, ON\\
  \texttt{michael.wallace@uwaterloo.ca} \\
}
\begin{document}
\maketitle

\begin{abstract}
Many observational studies feature irregular longitudinal data, where the observation times are not common across individuals in the study. Further, the observation times may be related to the longitudinal outcome. In this setting, failing to account for the informative observation process may result in biased causal estimates. This can be coupled with other sources of bias, including non-randomized treatment assignments and informative censoring. This paper provides an overview of a flexible weighting method used to adjust for informative observation processes and non-randomized treatment assignments. We investigate the sensitivity of the flexible weighting method to violations of the noninformative censoring assumption, examine variable selection for the observation process weighting model, known as inverse intensity weighting, and look at the impacts of weight trimming for the flexible weighting model. We show that the flexible weighting method is sensitive to violations of the noninformative censoring assumption and show that a previously proposed extension fails under such violations. We also show that variables confounding the observation and outcome processes should always be included in the observation intensity model. Finally, we show that weight trimming should be applied in the flexible weighting model when the treatment assignment process is highly informative and driving the extreme weights. We conclude with an application of the methodology to a real data set to examine the impacts of household water sources on malaria diagnoses. 
\end{abstract}

\keywords{Irregular longitudinal data \and informative observations \and non-randomized treatments \and censoring \and weighting}

\section{Background}

Researchers in healthcare often have access to observational data in the form of medical records or clinical data. Such data are often longitudinal, where individuals receiving care have repeated observations recorded over time. Observational data provide unique challenges for the analysis of longitudinal data when individuals do not have a prescribed observation schedule. In this setting, we often encounter \emph{irregular} longitudinal data where the observation times vary between individuals, and in extreme cases are completely unique. Methods like \emph{generalized linear mixed-effects models} (GLMMs) and, in some cases, \emph{generalized estimating equations} (GEEs) can handle irregular longitudinal data. However, these methods may produce biased estimates of the outcome model parameters if the observation times are \emph{informative}, meaning the observation times are related to the longitudinal outcome. For example, we may have access to observational clinical data where patients with more severe symptoms may be more likely to visit a clinic and be seen by a physician, who then records their data into a database. If we were to analyze these clinical data without accounting for the observation process, we may obtain biased estimates of any treatment effects due to the informative observation process. Methods for handling specific cases of informative observation processes have previously been proposed in the literature, and include \emph{inverse intensity weighting} (IIW) \cite{Lin2004, Buzkova2007} and other semi-parametric methods involving correlated random effects \cite{Liang2009, sun2007regression, sun2011semiparametric, sun2011regression, cai2012time, song2012regression, sun2012joint}.  

If a treatment effect is to be estimated, we must also consider the nature of the treatment assignment process for observational longitudinal data.  In most observational data sets, the treatment or intervention of interest will not be assigned randomly. For example, patients with more severe symptoms may be more likely to receive treatment, or, say a higher dose of a given treatment.  In this setting, the treatment groups may have systematic differences that do not make them directly comparable, and these differences may bias estimates of causal effects if not accounted for. When we cannot rely on randomization to balance the treatment groups, \emph{inverse probability of treatment weighting} (IPTW) \cite{RR1983} can be employed to reduce the bias of causal estimates, such as the \emph{average treatment effect} (ATE). 

In practice, we may see data with informative observation processes where the treatment assignment is not randomized. For example, if we consider the analysis of data from a clinical database, sicker patients may be more likely to visit the clinic and have an observation recorded, and also may be more likely to be prescribed a treatment. In this setting, we have two potential sources of bias; one from the informative observation process, and one from the non-randomized treatment assignment process. To simultaneously account for both sources of bias, we can employ a weighting method proposed by Coulombe et al. \cite{coulombe2021}, which we refer to as the \emph{flexible inverse probability of treatment and intensity weighting} (FIPTIW). This method combines IIW and IPTW in an intuitive way that is simple to implement using standard statistical software. 

The aim of this paper is to provide an overview and practical guidance for fitting models using FIPTIW by providing novel investigations on the sensitivity of the FIPTIW method. We begin with a discussion of the assumptions on the observation and treatment assignment processes in Section \ref{sec:FIPTIWassumptions}. We then review the existing methods for handling informative observation and treatment assignment processes in Section \ref{sec:FIPTIWmethods}. We follow with three simulation studies in Section \ref{sec:FIPTIWsimulations}. The first investigates the impact of violations of the noninformative censoring assumption on the FIPTIW method, and also investigate if \emph{inverse probability of censoring weighting} (IPCW) \cite{robins2000marginal} can be further included to account for the bias introduced by such violations. Next, we investigate variable inclusion for IIW models to provide practical guidance on which variables to include in IIW (and hence FIPTIW) models. Finally, we investigate the impacts of extreme weights and  weight trimming for FIPTIW. In Section \ref{sec:FIPTIWdatanalysis}, we implement a data analysis of a malaria data set. We conclude with a discussion in Section \ref{sec:ch1discussion}.

\section{Assumptions}\label{sec:FIPTIWassumptions}

Consider a study from $t=0$ to the study end time $\tau$ where $t_{i1}, t_{i2},... , t_{iK_i}$ are the times at which individual $i$ is observed for $i = 1, 2, ..., n$ and  $0 \le t_{i1} < t_{i2} < ... < t_{iK_i} \le \tau$  where $\tau$ is the study end time. That is, each individual in the study has a potentially unique set of $K_i$ observation times.

For the purposes of this paper, we will focus on the following semiparametric marginal outcome model
\begin{equation}\label{eq:outcomemodelch1}
    g(\mu_i(t)) = \boldsymbol{\beta}^T\boldsymbol{X}_i(t),
\end{equation}
with primary interest in the estimation of the parameter vector $\boldsymbol{\beta}$. In Equation (\ref{eq:outcomemodelch1}), $\boldsymbol{X}_i(t) = (X_{i1}(t), X_{i2}(t), \allowbreak ...., X_{ip}(t))^T$ is a vector of $p$ observed covariates for individual $i$ at time $t$, and $\mu_i(t) = E(Y_i(t) | \boldsymbol{X}_i(t))$ where $Y_i(t)$ is the longitudinal outcome of individual $i$ at time $t$, for $i = 1, 2, ..., n$ and $t = t_{i1}, t_{i2}, ..., t_{iK_i}$. Further, $g(\cdot)$ is a known, monotonic, and differentiable link function. We assume the set of covariates $\boldsymbol{X}(t)$ contains a possibly time-varying, binary treatment $D(t)$ for which we'd like to estimate the treatment effect. 

\subsection{Assumptions on the Observation Process}

We explicitly define the observation process to be the underlying process that dictates when individuals have data recorded. In other articles, this has been referred to as the ``visit" \cite{Lin2004, Pullenayegum2012, Pullenayegum2016, Pullenayegum2021, aghababaei2022variable, coulombe2021continuous}, ``participation"\cite{Schmidt2017}, ``assessment" \cite{Smith2022, Pullenayegum2023}, or ``monitoring" \cite{coulombe2021} process. We opt to use ``observation" over other terms as it is more general and also connects irregular longitudinal data analysis to the missing data literature, as the analysis of longitudinal data can often be viewed as a missing data problem \cite{Buzkova2007}.

Assume the primary objective is to model the conditional mean of a longitudinal outcome $Y(t)$ on a set of covariates $\boldsymbol{X}(t)$. We denote the counting process for the number of observations individual $i$ has by time $t$ as $N_i(t) = \sum_{k = 1}^{K_i}\mathbbm{1}_{(t_{ik}\le t)}$, where $\mathbbm{1}_{(E)}$ is the indicator function for event $E$. We let $C_i$ denote the censoring time at which follow-up ceases for individual $i$, such that $C_i \le \tau$. Although individuals may be censored prior to end of the study, we consider the counting process for the counterfactual observation times, denoted $N_i^*(t)$. We relate the censored and uncensored counting processes as $N_i(t) = N_i^*(t \wedge C_i)$, where we define $a \wedge b = min(a,b)$. Let $\boldsymbol{V}_i(t)$ be a set of auxiliary covariates related to the observation process but omitted from the outcome model.

For any arbitrary process $A(t)$, we define $\bar{A}(t) = \{A(s): 0 \le s \le t\}$ as the entire (and potentially counterfactual)  history of the process up to and including time $t$. We denote $\bar{A}(t)^{obs}$ to include only the observed history of $A$ up to and including time $t$. We let $\bar{A}(\infty) = \{A(s): a > 0\}$ be the entire (counterfactual) process that includes times beyond the study end time. Further, we let $dN(t) = N(t) - N(t^{-})$ where $N(t^{-}) = \lim_{s \rightarrow t}N(s)$. That is, $dN_i(t) = 1$ if individual $i$ is observed at time $t$, and is 0 otherwise. $dN_i^*(t)$ is similarly defined, where $dN_i^*(t) = 1$ if individual $i$ is observed at time $t$ in the counterfactual observation times, and is 0 otherwise.  

Throughout this paper, we will refer to five assumptions on the observation process. The first assumption states that the intensity or probability of the observation times can depend on a set of covariates $\boldsymbol{Z}(t)$, which can include the observed history of the outcome covariates $\bar{\boldsymbol{X}}(t)$, the observed history of the auxiliary covariates $\bar{\boldsymbol{V}}_i(t)$, previous observation times $\bar{\boldsymbol{N}}_i(t^-)$, and previous observed outcomes $\bar{\boldsymbol{Y}}^{obs}_i(t^-)$. That is, we assume
\begin{assumption}{O}{1}\label{O1}
Independent sampling:
\begin{equation*}
       E(dN_i^*(t) | \boldsymbol{Z}_i(t), \boldsymbol{X}_i(t), Y_i(t), C_i \ge t) = E(dN_i^*(t) | \boldsymbol{Z}_i(t)).
   \end{equation*}
\end{assumption} 
\noindent As $\boldsymbol{Z}_i(t)$ can contain the outcome model covariates ($\boldsymbol{X}_i(t)$), the observed history of the outcome model covariates ($\bar{\boldsymbol{X}}_i^{obs}(t)$), the observed history of auxiliary covariates related to the probability of being observed but omitted from the outcome model ($\bar{\boldsymbol{V}}_i^{obs}(t)$), information on past observation times ($\bar{\boldsymbol{N}}_i(t^-)$), and past observed outcomes prior to time $t$ ($\bar{\boldsymbol{Y}}^{obs}_i(t^-)$), we can re-write Assumption \ref{O1} as
   \begin{equation*}
   \begin{aligned}
       E(dN_i^*(t) |  \bar{\boldsymbol{X}}_i^{obs}(t), \bar{\boldsymbol{V}}_i^{obs}(t), \bar{\boldsymbol{N}}_i(t^-), \bar{\boldsymbol{Y}}^{obs}_i(t^-), &\boldsymbol{X}_i(t), Y_i(t), C_i \ge t)\\ = E(dN_i^*(t) | &\boldsymbol{X}_i(t),\bar{\boldsymbol{X}}_i^{obs}(t), \bar{\boldsymbol{V}}_i^{obs}(t), \bar{\boldsymbol{N}}_i(t^-), \bar{\boldsymbol{Y}}^{obs}_i(t^-)).       
   \end{aligned}
   \end{equation*}
This implies $N_i^*(t) \perp  Y_i(t), C_i \ge t | \bar{\boldsymbol{X}}_i^{obs}(t), \bar{\boldsymbol{V}}_i^{obs}(t), \bar{\boldsymbol{N}}_i(t^-), \text{ and } \bar{\boldsymbol{Y}}^{obs}_i(t^-)$. 

Next, we make an assumption on the censoring times such that 
\begin{assumption}{O}{2}\label{O2}
Noninformative censoring:
\begin{equation*}
    E(Y_i(t) | \boldsymbol{X}_i(t), C_i \ge t) = E(Y_i(t) | \boldsymbol{X}_i(t)).
\end{equation*}
\end{assumption}
\noindent Assumption \ref{O2} may not be met in some applications. For example, when analyzing the relationship between a treatment and a health outcome in an observational study, sicker (or in some cases, healthier) patients may drop out prior to the end of the study. In this setting, censoring times are related to longitudinal outcome, which is related to the observation times. We will investigate the implications violations of this assumption in Section \ref{sec:censoringassumptionsim}.

We also assume 
\begin{assumption}{O}{3}\label{O3}
Separability: The outcome and observation times model parameters are separable (i.e., the models do not share parameters).
\end{assumption}

We also assume
\begin{assumption}{O}{4}\label{O4}
Correct specification of the observation intensity model: The model for the observation intensity is correctly specified.
\end{assumption}

And finally, we assume
\begin{assumption}{O}{5}\label{O5}
Completely observed observation-level covariates: The covariates related to the observation process are known at all possible observation times.
\end{assumption}
This assumption means that at any possible observation time (including those not observed), we can obtain the value of $\boldsymbol{Z}_i(t)$ for all individuals. When $\boldsymbol{Z}_i(t)$ does not contain any time-varying covariates, this assumption is automatically met. However, the use of time-varying covariates in $\boldsymbol{Z}_i(t)$ can complicate analysis when these covariates are not observed in continuous time. When $\boldsymbol{Z}_i(t)$ is not observed completely, Buzkova and Lumley \cite{Buzkova2007} recommend carrying the last observed value forward for unobserved $\boldsymbol{Z}_i(t)$. However, one must be cautious when employing this method when a non-trivial proportion of $\boldsymbol{Z}_i(t)$ are missing, as it can bias the parameters estimated in \emph{proportional hazards} (PH) models \cite{molenberghs2002prediction, molnar2008does, andersen2003attenuation, cao2021proportional} and longitudinal models \cite{lachin2016fallacies, saha2009bias, lane2008handling}.

When assumptions \ref{O1} to \ref{O5} are met, we will refer to the observation process as \emph{conditionally ignorable}. That is, conditional on the observed history of the covariates, the observation process can adequately be accounted for in the analysis to obtain unbiased and consistent estimates of the outcome model parameters.

\subsection{Assumptions on the Treatment Assignment Process}

We explicitly define the treatment assignment process as the underlying mechanism that determines whether individuals are treated or not over time. As previously discussed, most observational studies do not have treatments randomly assigned to groups of individuals. That is, confounding can be present in the treatment assignment if it was not randomized. In this setting, the treatment and control groups may systematically differ from each other and estimates of the ATE and other causal quantities may be biased. We assume that the probability of being assigned to the treatment group at any time is related to a set of possibly time-varying covariates $\boldsymbol{W}(t)$. 

 To discuss the assumptions we will make on the treatment assignment process, we first introduce the \emph{potential outcomes framework} as developed by Rosenbaum and Rubin \cite{RR1983}. Recall $Y_i(t)$ is the observed outcome for individual $i$ at time $t$. We also consider the potential outcomes that individual $i$ would have experienced had they been assigned to a specific treatment. We denote $Y_i^{(1)}(t)$ and $Y_i^{(0)}(t)$ to be the potential outcomes under treatment and control, respectively.  We note that these quantities are counterfactual as we cannot observe both potential outcomes simultaneously in practice. 

Regardless of the study design, the estimation of causal effects is a comparison of the potential outcomes. We often are interested in estimating the ATE, which we define as $E\{Y_i^{(1)}(t) - Y_i^{(0)}(t)\}$. To use IPTW, we require a number of assumptions in this framework. The first assumption is
\begin{assumption}{T}{1}\label{T1}
Strongly ignorable treatment assignment (SITA): the treatment assignment is conditionally independent of the potential outcomes given the observed covariates $\boldsymbol{W}(t)$, such that
\begin{equation*}
    Y_i^{(0)}(t), Y_i^{(1)}(t) \perp D_i(t) | \boldsymbol{W}_i(t).
\end{equation*}
\end{assumption}
\noindent Implicit in Assumption \ref{T1} is the assumption that there are no unobserved confounders.

Next, we assume
\begin{assumption}{T}{2}\label{T2}
Stable unit treatment value assumption (SUTVA): the treatment assignment of a given individual does not affect the outcome of another individual.
\end{assumption}
\noindent The SUTVA assumption may not hold if, for example, the vaccination status of an individual impacts the probability of transmitting the disease to someone they are in contact with. 

We also assume
\begin{assumption}{T}{3}\label{T3}
Consistency: the observed outcome under a specific treatment is equal to that treatment's potential outcome, and we can only observe one potential outcome at a time. 
\end{assumption}
\noindent Under Assumption \ref{T3}, under a binary treatment we can write the observed outcome as a function of the potential outcomes as
\begin{equation*}
    Y_i(t) = D_iY_i^{(1)}(t) + (1 - D_i)Y_i^{(0)}(t).
\end{equation*}

We further assume
\begin{assumption}{T}{4}\label{T4}
Positivity: the probability of receiving a given treatment is greater than zero for every individual.
\end{assumption}

We  finally assume
\begin{assumption}{T}{5}\label{T5}
Correct specification of the propensity score model: the model for the propensity score is correctly specified.
\end{assumption}

When Assumptions \ref{T1} to \ref{T5} hold, we refer to the treatment assignment process as \emph{conditionally ignorable}. That is, conditional on the observed history of the covariates, the treatment assignment process can be adequately accounted for in the analysis through IPTW to provide consistent and unbiased estimates of causal quantities like the ATE.

\section{Methods}\label{sec:FIPTIWmethods}

\subsection{Inverse Intensity Weighting}\label{sec:iiw}

Under Assumptions \ref{O1} to \ref{O5}, we consider the observation process to be conditionally ignorable and we can employ the inverse intensity weighted generalized estimating equation (IIW-GEE) to obtain estimates of the parameters of interest in the outcome model in Equation (\ref{eq:outcomemodelch1}) (ignoring the treatment assignment process for now). Estimating an IIW-GEE involves a two-step process where we first estimate the inverse intensity weights in an observation times model and then use the estimated weights to obtain weighted estimates of the quantities of interest from the outcome model.

To use IIW, we model the uncensored observation times as
\begin{equation}\label{eq:obstimemodelch1}
    E\left\{ dN_i^*(t)|\boldsymbol{Z}_i(t)\right\}=  \lambda_0(t)\exp \left\{\boldsymbol{\gamma}^T\boldsymbol{Z}_i(t) \right\},
\end{equation}
where $\lambda_0(\cdot)$ is an unspecified non-decreasing function, $\boldsymbol{Z}_i(t)$ is a vector of covariates for the observation times model, and $\boldsymbol{\gamma}$ is the corresponding parameter vector. Recall $\boldsymbol{Z}_i(t)$ may include outcome model covariates ($\boldsymbol{X}(t)$), past observed outcomes ($\bar{\boldsymbol{Y}}^{obs}(t^-)$), information about previous observation times ($\bar{\boldsymbol{N}}_i(t^-)$), and auxiliary covariates that are not included in the outcome model but related to the observation times ($\boldsymbol{V}_i(t)$). We emphasize that  $\boldsymbol{Z}_i(t)$ must be known at all times (Assumption \ref{O5}).

We define the inverse intensity weights, for the $i$th individual at time $t$, as
\begin{equation}\label{eq:weighteq}
    w_i^{IIW}(t; \boldsymbol{\gamma}, h) = \frac{h(\boldsymbol{X}_i(t))}{\exp \left\{ \boldsymbol{\gamma}^T\boldsymbol{Z}_i(t) \right\}},
\end{equation}
where $h(\cdot)$ can be any positive function of the outcome model covariates $\boldsymbol{X}_i(t)$. In the observation times model we can include (and must specify) various functional forms of the covariates $\boldsymbol{Z}_i(t)$ including interactions and higher-order terms \cite{Lin2004}. These inverse intensity weights are proportional to the probability of individual $i$ having an observation at time $t$ relative to the other individuals, under the model in Equation (\ref{eq:obstimemodelch1}) \cite{Buzkova2007}.

In practice, these weights must be estimated from the data as the parameter vector $\boldsymbol{\gamma}$ is unknown in the observation times model. To estimate the weights, we use semiparametric models \cite{Lin2004}. Buzkova and Lumley \cite{Buzkova2007} showed that the following estimation function can be used to estimate the parameter vector $\boldsymbol{\gamma}$:
\begin{equation*}
    U^{\dagger}(\boldsymbol{\gamma}) = \sum_{i = 1}^n\int_0^{\tau} \{\boldsymbol{Z}_i(t) - \bar{\boldsymbol{Z}}(t, \boldsymbol{\gamma})\}dN_i(t),
\end{equation*}
where $\bar{\boldsymbol{Z}}(t, \gamma)$ is a weighted average of $\boldsymbol{Z}$ at time $t$, such that
\begin{equation*}
    \bar{\boldsymbol{Z}}(t, \boldsymbol{\gamma}) = \sum_{i = 1}^n \boldsymbol{Z}_i(t) \frac{\exp\left\{\boldsymbol{\gamma}^T\boldsymbol{Z}_i(t)\right\}I(C_i \ge t)}{\sum_{j = 1}^n \exp\left\{\boldsymbol{\gamma}^T\boldsymbol{Z}_j(t)\right\}I(C_j \ge t)}.
\end{equation*}
The solution to $U^{\dagger}(\boldsymbol{\gamma}) = 0$, $\hat{\boldsymbol{\gamma}}$, is a consistent estimator of the parameter vector $\boldsymbol{\gamma}$ in the observation times model \cite{Buzkova2007}. Because of its form, the Cox PH model can be used to estimate the intensity as in Equation (\ref{eq:obstimemodelch1}), however we note that the estimator for the asymptotic variance of $\hat{\boldsymbol{\gamma}}$ will differ from the asymptotic variance of parameters estimated using a Cox PH model \cite{Buzkova2007}.

Recall that in the model for the weights in Equation (\ref{eq:weighteq}), we also require the specification of a function $h(\cdot)$. A convenient choice of the numerator of the  weights is $h(\boldsymbol{X}_i(t)) = 1$. We refer to such weights as \emph{non-stabilized} weights. We can also use \emph{stabilized} weights, as proposed by Buzkova and Lumley \cite{Buzkova2007}, where $h(\boldsymbol{X}_i(t)) = \exp\{ \boldsymbol{\delta}^T\boldsymbol{X}_i(t)\}$ and $\boldsymbol{\delta}$ is the parameter vector in the observation times model using only the outcome model covariates $\boldsymbol{X}_i(t)$. The stabilized weights are typically preferred as they can achieve a smaller variance than the non-stabilized weights \cite{Buzkova2007}. Further, if the observation process is uninformative, the IIW-GEE simplifies to an independent unweighted GEE using the stabilized weights.  

To estimate the outcome model in Equation (\ref{eq:outcomemodelch1}) when the observation process is conditionally ignorable (while ignoring any treatment assignment processes) we can use a weighted GEE, which involves specifying a number of components. First, we need to specify an appropriate link function $g(\cdot)$ for the semiparametric marginal model in Equation (\ref{eq:outcomemodelch1}). Second, we require the specification of the conditional variance of each observation, given the covariates. We assume this quantity is dependent on the mean through
\begin{equation}\label{eq:variancefun}
    Var\left\{Y_i(t) | \boldsymbol{X}_i(t)\right\} = \phi v(\mu_i(t)),
\end{equation}
where $v(\cdot)$ is a known variance function and $\phi$ is a positive scale parameter, which may need to be estimated. 

The estimating function for the outcome model, as motivated by the GEE, is
\begin{equation*}
\begin{aligned}
    U(\boldsymbol{\beta}; \hat{\boldsymbol{\gamma}}, h) &= \sum_{i = 1}^n \int_{0}^{\tau} \boldsymbol{X}_i(t) \left\{ \frac{dg(\mu)}{d\mu}\bigg\rvert_{\mu_i(t; \boldsymbol{\beta})} \right\}^{-1} v(\mu_i(t; \boldsymbol{\beta}))^{-1}\\
    &\times \{Y_i(t) - \mu_i(t; \boldsymbol{\beta}) \}\frac{h(\boldsymbol{X}_i(t))}{\lambda_i(t; \boldsymbol{\widehat{\gamma}}, h)}dN_i(t),
\end{aligned}
\end{equation*}
which resembles a GEE with an independent working correlation structure and subject-specific inverse intensity weights. Correct specification of the variance function $v(\mu_i(t; \boldsymbol{\beta}))$ increases the efficiency of the estimator for $\boldsymbol{\beta}$, however is not required to obtain consistent estimates, asymptotic normality, or the validity of its covariance estimator \cite{Buzkova2007}.

\subsection{Flexible Inverse Probability of Treatment and Intensity Weighting}

As previously discussed, we may also have to adjust for the treatment assignment process, along with the observation process, when treatments were not randomly assigned. When the observation process is conditionally ignorable (assumptions \ref{O1} to \ref{O5} hold) and the treatment assignment process is conditionally ignorable (assumptions \ref{T1} to \ref{T5} hold), we can employ the flexible weighting method proposed by Coulombe et al. \cite{coulombe2021}, which we refer to as FIPTIW. FIPTIW combines IIW and IPTW to create a pseudopopulation in which both the observation  and treatment assignment processes can be ignored. 

We first define the IPTW weight for individual $i$ at time $t$ as
\begin{equation*}
    w_i^{IPTW}(t; \boldsymbol{\alpha}, \pi) = \frac{1}{\mathbbm{1}_{(D_i(t) = 1)}\pi(\boldsymbol{W}_i(t); \boldsymbol{\alpha}) + \mathbbm{1}_{(D_i(t) = 0)}(1 - \pi(\boldsymbol{W}_i(t); \boldsymbol{\alpha}))},
\end{equation*}
where $\pi(\boldsymbol{W}_i(t); \boldsymbol{\alpha}) = \text{Pr}(D_i(t) = 1 | \boldsymbol{W}_i(t))$ is the probability of being in the treatment group at time $t$ (conditional on covariates), which is also known as the \emph{propensity score}. The propensity scores (and thus the IPTW weights) are unknown in practice and must be estimated from the available data. While many choices of model exist for estimating the probability of receiving a binary treatment, common methods include logistic regression or tree-based methods such as \emph{generalized boosted models} (GBMs) \cite{Austin2011}. We can model the probability of treatment assignment and obtain estimates of $\widehat{\boldsymbol{\alpha}}$ and then estimate the propensity scores as $\pi(\boldsymbol{W}_i(t); \widehat{\boldsymbol{\alpha}})$. Plugging in the estimated propensity scores, we estimate the IPTW weights as
\begin{equation*}
    \widehat{w}_i^{IPTW}(t; \widehat{\boldsymbol{\alpha}}, \pi) = \frac{1}{\mathbbm{1}_{(D_i(t) = 1)}\pi(\boldsymbol{W}_i(t); \widehat{\boldsymbol{\alpha}}) + \mathbbm{1}_{(D_i(t) = 0)}(1 - \pi(\boldsymbol{W}_i(t); \widehat{\boldsymbol{\alpha}}))},
\end{equation*}
where again, $\mathbbm{1}(\cdot)$ is the indicator function. 

The FIPTIW weights are calculated by multiplying the IIW and IPTW weights together. That is, the FIPTIW weight for individual $i$ at time $t$ is
\begin{equation}
\begin{aligned}
    w_i^{FIPTIW}(t; \boldsymbol{\alpha}, \boldsymbol{\gamma}, \pi, h) &= w_i^{IPTW}(t; \boldsymbol{\alpha}, \pi) \times w_i^{IIW}(t; \boldsymbol{\gamma}, h).\\
\end{aligned}
\end{equation}
However, as each of the individual weights in the FIPTIW are unknown in practice, the FIPTIW weight must also be estimated from the data. We estimate it by
\begin{equation*}
    \widehat{w}_i^{FIPTIW}(t; \widehat{\boldsymbol{\alpha}}, \widehat{\boldsymbol{\gamma}}, \pi, h) =   \widehat{w}_i^{IIW}(t;  \widehat{\boldsymbol{\gamma}},  h) \times \widehat{w}_i^{IPTW}(t;  \widehat{\boldsymbol{\alpha}}, \pi),
\end{equation*}
where the IIW weights, $\widehat{w}_i^{IIW}(t; \widehat{\boldsymbol{\gamma}}, h)$,  can be estimated as in Section \ref{sec:iiw}.

For simplicity, the remainder of this paper focuses on the setting where the parametric form of the intercept function is specified, as in the work on IIW by Buzkova and Lumley \cite{Buzkova2007}, however we note that the intercept could be estimated using splines as in Coulombe et al. \cite{coulombe2021}. We explicitly define a weighted GEE (where the working correlation structure is set to independent) which incorporates the FIPTIW weights into the model. We refer to this GEE as the FIPTIW-GEE, with the estimating equation specified as
\begin{equation*}
\begin{aligned}\label{eq:estimatingknown}
    U(\boldsymbol{\beta}; \boldsymbol{\alpha}, \boldsymbol{\gamma}, \pi, h) &= \sum_{i = 1}^n \int_{0}^{\tau} \boldsymbol{X}_i(t) \left\{ \frac{dg(\mu)}{d\mu}\bigg\rvert_{\mu_i(t; \boldsymbol{\beta})} \right\}^{-1} v(\mu_i(t; \boldsymbol{\beta}))^{-1}\\
    &\times \{Y_i(t) - \mu_i(t; \boldsymbol{\beta}) \}w_i^{FIPTIW}(t; \boldsymbol{\alpha}, \boldsymbol{\gamma}, \pi, h)dN_i(t).
\end{aligned}
\end{equation*}
As the FIPTIW weights are unknown, we can then use the estimated weights in the estimating equation as
\begin{equation}\label{eq:FIPTIWgee}
\begin{aligned}
    U(\boldsymbol{\beta}; \widehat{\boldsymbol{\alpha}}, \widehat{\boldsymbol{\gamma}}, \pi, h) &= \sum_{i = 1}^n \int_{0}^{\tau} \boldsymbol{X}_i(t) \left\{ \frac{dg(\mu)}{d\mu}\bigg\rvert_{\mu_i(t; \boldsymbol{\beta})} \right\}^{-1} v(\mu_i(t; \boldsymbol{\beta}))^{-1}\\
    &\times \{Y_i(t) - \mu_i(t; \boldsymbol{\beta}) \}\widehat{w}_i^{FIPTIW}(t; \widehat{\boldsymbol{\alpha}}, \widehat{\boldsymbol{\gamma}}, \pi, h)dN_i(t).
\end{aligned}
\end{equation}

Coulombe et al. \cite{coulombe2021} did not make any assumptions about the dependence between the observation and treatment assignment processes. However, when the treatment assignment and observation processes are conditionally independent given $\boldsymbol{X}_i(t), \boldsymbol{Z}_i(t), \boldsymbol{W}_i(t), Y_i(t),$ and the censoring time $C_i$, the weights intuitively reflect the inverse probability of individual $i$ having an observation at time $t$ and receiving the treatment they were assigned, relative to the other individuals. Further, under this assumption, we can show that the estimating equation in Equation (\ref{eq:FIPTIWgee}) is unbiased by the following theorem:
\begin{theorem}\label{thm:unbiased}
The FIPTIW estimating equation in Equation (\ref{eq:FIPTIWgee}) has zero mean at the point of true parameters $\{\boldsymbol{\beta}, \boldsymbol{\alpha}, \boldsymbol{\gamma}, \pi, h\}$ for any function $h(\cdot)$ of covariates $\boldsymbol{X}_i(t)$.
\end{theorem}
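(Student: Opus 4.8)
\emph{Proof strategy.} The plan is to verify $E\{U(\boldsymbol{\beta};\boldsymbol{\alpha},\boldsymbol{\gamma},\pi,h)\}=0$ one individual at a time: by linearity of expectation it is enough to show that the summand indexed by a generic subject $i$ has zero mean, so from here on I would drop the subscript $i$ and work with the single integral $\int_0^\tau(\cdot)\,dN_i(t)$. Using $dN_i(t)=dN_i^*(t\wedge C_i)=dN_i^*(t)\,\mathbbm{1}_{(C_i\ge t)}$ and the definitions of the weights, the integrand at time $t$ factors as a product of (a) a term $\boldsymbol{b}_i(t)$ that is a deterministic function of $\boldsymbol{X}_i(t)$ and $\boldsymbol{\beta}$ (collecting $\boldsymbol{X}_i(t)$, the derivative of the link, the variance function, and $h(\boldsymbol{X}_i(t))$), (b) the residual $Y_i(t)-\mu_i(t;\boldsymbol{\beta})$, (c) the inverse-probability-of-treatment factor $w_i^{IPTW}(t;\boldsymbol{\alpha},\pi)$, which depends only on $D_i(t)$ and $\boldsymbol{W}_i(t)$, (d) the factor $\exp\{-\boldsymbol{\gamma}^T\boldsymbol{Z}_i(t)\}$ coming from the inverse-intensity weight in Equation (\ref{eq:weighteq}), and (e) the increment $dN_i^*(t)\,\mathbbm{1}_{(C_i\ge t)}$. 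I would then interchange the expectation with $\int_0^\tau$ (Fubini, using Assumption \ref{T4} and the form of the intensity model to keep the weighted integrand integrable) and, for each fixed $t$, evaluate the expectation of the integrand through a chain of nested conditional expectations, peeling off first the observation process, then the treatment process, and finally the censoring indicator.

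For the observation step I would condition on $\{\boldsymbol{X}_i(t),\boldsymbol{Z}_i(t),\boldsymbol{W}_i(t),Y_i(t),D_i(t),\,C_i\ge t\}$, so that factors (a)--(d) are measurable and only $dN_i^*(t)$ is random. Assumption \ref{O1} together with the hypothesised conditional independence of the observation and treatment processes given $(\boldsymbol{X}_i(t),\boldsymbol{Z}_i(t),\boldsymbol{W}_i(t),Y_i(t),C_i)$ collapses the conditional mean of $dN_i^*(t)$ to $E\{dN_i^*(t)\mid\boldsymbol{Z}_i(t)\}$, which by the correctly specified model of Equation (\ref{eq:obstimemodelch1}) (Assumption \ref{O4}) equals $\lambda_0(t)\exp\{\boldsymbol{\gamma}^T\boldsymbol{Z}_i(t)\}$. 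Multiplying by factor (d) cancels the $\exp\{-\boldsymbol{\gamma}^T\boldsymbol{Z}_i(t)\}$ exactly and leaves the deterministic multiplier $\lambda_0(t)$; the observation process has been eliminated, and what remains inside the expectation is $\boldsymbol{b}_i(t)\{Y_i(t)-\mu_i(t;\boldsymbol{\beta})\}\,w_i^{IPTW}(t;\boldsymbol{\alpha},\pi)\,\mathbbm{1}_{(C_i\ge t)}$.

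For the treatment step I would invoke consistency (Assumption \ref{T3}) to rewrite $Y_i(t)$, $\boldsymbol{X}_i(t)$, $\mu_i(t;\boldsymbol{\beta})$ and $\boldsymbol{b}_i(t)$ through their potential-outcome versions indexed by $d\in\{0,1\}$, and to write $w_i^{IPTW}(t;\boldsymbol{\alpha},\pi)=\sum_{d}\mathbbm{1}_{(D_i(t)=d)}/\Pr(D_i(t)=d\mid\boldsymbol{W}_i(t))$. Conditioning on $\boldsymbol{W}_i(t)$ jointly with the potential outcomes and covariates, SITA (Assumption \ref{T1}) gives $E\{\mathbbm{1}_{(D_i(t)=d)}\mid\cdot\}=\Pr(D_i(t)=d\mid\boldsymbol{W}_i(t))$, which cancels the propensity-score denominator and leaves, for each $d$, the unweighted pseudopopulation quantity $\boldsymbol{b}_i^{(d)}(t)\{Y_i^{(d)}(t)-\mu_i^{(d)}(t;\boldsymbol{\beta})\}\,\mathbbm{1}_{(C_i\ge t)}$. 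Finally, conditioning on $\boldsymbol{X}_i^{(d)}(t)$ and on $\{C_i\ge t\}$, noninformative censoring (Assumption \ref{O2}, transported to $Y_i^{(d)}$ via consistency and SITA) gives $E\{Y_i^{(d)}(t)\mid\boldsymbol{X}_i^{(d)}(t),C_i\ge t\}=E\{Y_i^{(d)}(t)\mid\boldsymbol{X}_i^{(d)}(t)\}=\mu_i^{(d)}(t;\boldsymbol{\beta})$ at the true $\boldsymbol{\beta}$, so the residual has conditional mean zero; hence the integrand has zero mean for every $t$, the integral over $[0,\tau]$ vanishes, and summing over $i$ yields $E\{U\}=0$.

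I expect the main obstacle to be the bookkeeping at the interface of the two weighting schemes rather than any single hard inequality. Two points need care. First, Assumption \ref{O1} is stated conditionally on $(\boldsymbol{Z}_i(t),\boldsymbol{X}_i(t),Y_i(t),C_i\ge t)$ only, so applying it while also conditioning on $\boldsymbol{W}_i(t)$ and $D_i(t)$ is precisely where the conditional-independence hypothesis between the observation and treatment processes is used, and one must check that this hypothesis together with \ref{O1} genuinely reduces the conditional mean of $dN_i^*(t)$ to $E\{dN_i^*(t)\mid\boldsymbol{Z}_i(t)\}$ and not merely to a function of the enlarged conditioning set. Second, Assumption \ref{O2} is stated for the observed outcome $Y_i(t)$ and the outcome-model covariates $\boldsymbol{X}_i(t)$, so one must verify that it transfers to the potential-outcome quantities $Y_i^{(d)}(t)$ appearing after the IPTW step (this implicitly also requires the censoring time to be unrelated to treatment given $\boldsymbol{W}_i(t)$), and that ``the true $\boldsymbol{\beta}$'' is understood as the parameter of the marginal model holding in the IPTW pseudopopulation, i.e.\ $E\{Y_i^{(d)}(t)\mid\boldsymbol{X}_i^{(d)}(t)\}=\mu_i^{(d)}(t;\boldsymbol{\beta})$. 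A secondary, purely technical matter is making the ``replace $dN_i^*(t)$ by $\lambda_0(t)\exp\{\boldsymbol{\gamma}^T\boldsymbol{Z}_i(t)\}$'' step rigorous via the compensator of the counting process $N_i^*$ rather than by manipulating increments heuristically, and confirming the integrability needed for the Fubini interchange.
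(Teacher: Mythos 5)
Your handling of the observation process is essentially the paper's argument: condition on everything, use the hypothesised conditional independence of the two processes together with Assumption \ref{O1} to collapse $E\{dN_i^*(t)\mid\cdot\}$ to $E\{dN_i^*(t)\mid\boldsymbol{Z}_i(t)\}$, invoke Assumption \ref{O4} so that $\lambda_0(t)\exp\{\boldsymbol{\gamma}^T\boldsymbol{Z}_i(t)\}$ cancels the IIW denominator, and note that $dN_i(t)=dN_i^*(t)$ on $\{C_i\ge t\}$. Where you genuinely diverge is the treatment step: you integrate out $D_i(t)$ via consistency (Assumption \ref{T3}) and SITA (Assumption \ref{T1}), pass to potential outcomes $Y_i^{(d)}(t)$, and then must transport Assumption \ref{O2} to the potential-outcome scale. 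The paper avoids all of this by observing that $D_i(t)$ is contained in $\boldsymbol{X}_i(t)$ and $\boldsymbol{W}_i$ is in the conditioning set, so the IPTW factor is simply a measurable (bounded, by \ref{T4}) multiplier that can be pulled outside; the conclusion then follows directly from $E\{Y_i(t)-\mu_i(t;\boldsymbol{\beta})\mid\boldsymbol{X}_i(t)\}=0$, which holds by the definition of the true $\boldsymbol{\beta}$ in the outcome model of Equation (\ref{eq:outcomemodelch1}) (where $\mu_i(t)=E(Y_i(t)\mid\boldsymbol{X}_i(t))$), with no appeal to SITA, consistency, or a counterfactual version of the censoring assumption. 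Your route buys a causal, pseudopopulation interpretation of what the IPTW factor is doing, but at the price of auxiliary assumptions (censoring unrelated to treatment given $\boldsymbol{W}_i(t)$, \ref{O2} holding for $Y_i^{(d)}$) that the paper neither states nor needs for this theorem; to prove exactly the statement as given, with exactly the stated assumptions, you should finish as the paper does, keeping your potential-outcome argument as an (honestly flagged) aside about interpretation rather than as a load-bearing step.
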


\begin{proof}
See Supplementary Material \ref{sec:proofs}.
\end{proof}

We denote $\widehat{\boldsymbol{\beta}}$ to be the estimator of parameter vector $\boldsymbol{\beta}$ in Equation (\ref{eq:outcomemodelch1}) where $\widehat{\boldsymbol{\beta}}$  is a solution to $U(\boldsymbol{\beta}; \widehat{\boldsymbol{\alpha}}, \widehat{\boldsymbol{\gamma}}, \pi, h) = 0$. As we assume that the observations are independent and identically distributed and have shown that the estimating equation is unbiased in Theorem \ref{thm:unbiased}, then it follows that the solution $\widehat{\boldsymbol{\beta}}$ from $U(\widehat{\beta}; \widehat{\boldsymbol{\alpha}}, \widehat{\boldsymbol{\gamma}}, \pi, h) = 0$ is a consistent estimator of $\boldsymbol{\beta}$, as the estimators for the parameters in  $w_i^{FIPTIW}(t; \widehat{\boldsymbol{\alpha}}, \widehat{\boldsymbol{\gamma}}, \pi, h)$  are each consistent\cite{Tsiatis2006}. That is, the estimator $\widehat{\boldsymbol{\beta}}$ satisfies
\begin{equation*}
    \begin{aligned}
    \frac{1}{n}\sum_{i = 1}^n &\int_{0}^{\tau} \boldsymbol{X}_i(t) \left\{ \frac{dg(\mu)}{d\mu}\bigg\rvert_{\mu_i(t; \widehat{\boldsymbol{\beta}})} \right\}^{-1} v(\mu_i(t; \widehat{\boldsymbol{\beta}}))^{-1}\\
    &\times \{Y_i(t) - \mu_i(t; \widehat{\boldsymbol{\beta}}) \}\widehat{w}_i^{FIPTIW}(t; \widehat{\boldsymbol{\alpha}}, \widehat{\boldsymbol{\gamma}}, \pi, h)dN_i(t) = 0.
\end{aligned}
\end{equation*}
As this is a weighted average with mean zero, by the Law of Large Numbers it follows that $\frac{1}{n}U(\widehat{\boldsymbol{\beta}}, \widehat{\boldsymbol{\alpha}}, \widehat{\boldsymbol{\gamma}}, \pi, h) \overset{p}{\to} U(\boldsymbol{\beta}, \boldsymbol{\alpha}, \boldsymbol{\gamma}, \pi, h)$ and thus the estimator is consistent. The central limit theorem (CLT) can also be employed to prove the asymptotic normality of $\widehat{\boldsymbol{\beta}}$. Coulombe et al. \cite{coulombe2021} derived the asymptotic variance of the FIPTIW estimator. 

Coulombe et al. \cite{coulombe2021} also performed sensitivity analyses to examine the impact of having correlated treatment assignment confounders, having the same confounders in both the observation model and treatment assignment model, and model misspecification. In these simulations, the performance of the FIPTIW-GEE was not largely impacted by correlated treatment assignment confounders or having the same set of confounders affecting both the observation and treatment assignment processes. The FIPTIW method was also shown to be relatively insensitive to misspecification of the outcome model. However, the FIPTIW method was shown to be extremely sensitive to model misspecification when the observation process depended on non-linear functions of the covariates. 

There are still various questions surrounding the FIPTIW method which are yet to be answered in the literature. First, it is known that censoring/dropout can bias model parameters when it is related to the longitudinal outcome\cite{Ma2005}, however the impacts of violating the noninformative censoring assumption have yet to be explored for the FIPTIW method. Coulombe et al. \cite{coulombe2021} noted that it would be possible to incorporate IPCW weights to account for violations of the noninformative censoring assumption. IPCW weights can be estimated by first fitting a Cox PH model to estimate the censoring hazard. Including IPCW weights into the model may remove some of the bias introduced by the censoring mechanism, however it will not account for the poor estimation of the IIW weights due to the violation of this assumption as the noninformative censoring assumption is necessary for sufficient estimation of the IIW weights. To the best of our knowledge, there have been no papers investigating the inclusion of IPCW weights into the FIPTIW model. 

Second, it has been shown in the causal inference literature that propensity score models for IPTW weights should include true treatment confounders (covariates related to both the treatment assignment and outcome) and covariates related only to the outcome \cite{Brookhart2006}. Including both of these types of covariates in propensity score models has been shown to minimize the mean squared error of the outcome model parameter estimates \cite{Brookhart2006}. Further, including covariates that are only predictive of treatment assignment may inflate the variance of the estimator \cite{Brookhart2006}. However, variable inclusion for IIW (and hence FIPTIW) models has not yet been investigated in the literature. 

Finally, extreme IPTW weights can occur when the estimated propensity scores are close to 0 or 1 (near violations of Assumption \ref{T4}). It has been shown that these extreme weights can lead to an increase in variance of the estimated ATE \cite{Stuart2010}. One solution to handling extreme weights when using IPTW is weight trimming (also called truncation) where weights above a certain threshold are set to a maximum value \cite{potter1993, scharfstein1999}. The threshold is often determined using percentile cut points  where weights above the $p_a$th percentile are set to the value of the percentile $p_a$ \cite{cole2008constructing}. Similarly, we can set weights below the ($1-p_b$)th percentile to the value of the ($1-p_b$)th percentile. Weight trimming has been shown to improve the estimation of IPTW weights using logistic regression, but not when using classification and regression trees (CART) or random forests \cite{lee2011weight}. Although the FIPTIW method can produce extreme weights from either process, the impacts of extreme weights and weight trimming have yet to be examined for FIPTIW. 

In the following section, we aim to fill in the gaps in the existing literature by providing preliminary investigations on the impacts of violations of the noninformative censoring assumption and the inclusion of IPCW weights into the FIPTIW model in Section \ref{sec:censoringassumptionsim},  variable inclusion in intensity models in Section \ref{sec:varinclusion}, and weight trimming in Section \ref{sec:weighttrimming}.

\section{Simulation Studies}\label{sec:FIPTIWsimulations}

\subsection{Data Generating Mechanisms}\label{sec:datagen}

The data generating mechanisms presented in this section are based on the simulation studies presented in Buzkova and Lumley \cite{Buzkova2007}. We consider the scenario where we wish to determine the ATE of a time-invariant treatment $D$ on a continuous longitudinal outcome $Y(t)$. We note that in many scenarios, the estimation of a time-varying treatment may be of interest, however we limit our simulations to the setting where the treatment is time-invariant. Coulombe et al. \cite{coulombe2021} performed various studies using time-varying treatment and confounders, with similar results to the time-invariant setting. We let $D_i = 1$ indicate that individual $i$ is in the treatment group and $D_i = 0$ indicate that individual $i$ is in the control for the duration of the study. We also consider three other covariates,  $W$, $G(t)$, and $Z$, which are each specified below, that may be related to the observation and/or treatment assignment processes, where $0 \le t \le \tau$. 

For each of the $n$ individuals, we simulate covariates such that $G(t) = W\log(t)$ where $W \sim Unif(0,1)$, $D$ is a binary covariate with probability $\pi$, and $Z \sim N(\mu_{z,0},\sigma_{z,0}^2)$ if $D = 0$ or $Z \sim N(\mu_{z, 1},\sigma_{z, 1}^2)$ if $D= 1$. We let $\mu_{z,0} = 2$, $\sigma_{z,0}^2 = 1$, $\mu_{z,1} = 0$, and $\sigma_{z,1}^2 = 0.5$. The probability of being in the treated group $\pi$ will vary in each simulation study. We then generate the outcome as
\begin{equation}\label{eq:simoutcomenormal}
    Y_i(t) = \mu(t) + \beta_1D_i  + \beta_2(G_i(t) - E\{G_i(t)|D_i\}) + \beta_3(Z_i - E\{Z_i|D_i\}) + \phi_i + \epsilon_i(t),
\end{equation}
where $\mu(t) = (2 - t)$, $\epsilon_i(t) \sim N(0, 1)$ is the random error term (for each individual and time) and $\phi_i \sim N(0, 0.25)$ is an individual random effect. We let $\beta_1 = 0.5$, $\beta_2 = 2$, and $\beta_3 = 1$. We simulate a random effects model to create a more realistic scenario for longitudinal data, however the interest lies in the estimation of the marginal model. As the outcome $Y(t)$ is continuous, modelling via a marginal model should have very little effect on the bias of the estimated model parameters $\boldsymbol{\beta}$, even though the data was generated using a subject-specific (mixed effects) model. By centering the auxiliary covariates, we allow these covariates to be related to the longitudinal outcome but omitted from the marginal model of interest
\begin{equation}\label{eq:simulationmarginalmodel}
    E\{Y_i(t)|D_i\} = (2-t) + \beta_1D_i .
\end{equation}
The estimation of $\beta_1$, the average treatment effect, is of primary interest for each simulation study. We assume the form of the intercept $\mu(t)$ is known and treat it as an offset term in the model. Although the simulations assume the outcome is normally distributed, we note that we can also simulate data with non-identity link functions, as in Buzkova and Lumley \cite{Buzkova2007}. 

We also consider a model for the observation times, which has an intensity specified as
\begin{equation}\label{eq:simintensity}
    \lambda_{i,obs}(t) = \nu_i \lambda_0(t)\exp\{\gamma_1D_i + \gamma_2G_i(t) + \gamma_3 Z_i\},
\end{equation}
where $\nu_i$ is Gamma distributed with mean 1 and variance $\sigma^2_{\eta} = 0.1$, which makes the observation times positively correlated. We let $\lambda_0(t) = \frac{\sqrt{t}}{2}$, and consider various values for $\boldsymbol{\gamma} = (\gamma_1, \gamma_2, \gamma_3)$, which will vary across the simulations. To simulate the observation times, we turn to a thinning method presented by Lewis and Shedler \cite{Lewis1979}. The thinning method is one of the most popular methods for generating a nonhomogeneous Poisson process (NHPP) and is based on finding a constant intensity function $\bar{\lambda}$ for a homogeneous Poisson process (HPP) that dominates the desired intensity $\lambda(t)$. A rejection-acceptance algorithm is used to reject a portion of the generated observation times until the desired intensity or rate is achieved. Algorithm \ref{alg:thinning} shows the thinning algorithm for an intensity function $\lambda(t)$ over $(0, \tau]$. We also generate the censoring time as $C_i \sim Unif(\tau/2, \tau)$ where $\tau$ is the study end time, unless otherwise indicated. We let $\tau = 7$ in all simulations. All computations are performed using R version 4.2.2 \cite{Rsoftware}. Code for all of the simulation studies are publicly available on GitHub at \href{https://github.com/grcetmpk/MIPTIW}{https://github.com/grcetmpk/FIPTIW}.

\begin{algorithm}
    \KwInput{Intensity function $\lambda(t)$, maximum follow-up $\tau$}
    Initialize $n = 0, m = 0, t_0 = 0, s_0 = 0, \bar{\lambda} = \sup_{0\le t \le \tau}\lambda(t)$\;
    \While{$s_m < \tau$}
    {
    Generate $u \sim Unif(0, 1)$\;
    Let $b = -\log(u)/\bar{\lambda}$\;
    Set $s_{m+1} \gets s_m + b$\;
    Generate $R \sim Unif(0,1)$\;
    \If{$R \le \lambda(s_{m+1})/\bar{\lambda}$}
        {
        $t_{n+1} \gets s_{m+1}$\;
        $n \gets n+1$\;
        }
    $m \gets m + 1$;\
    }
    \eIf{$t_n \le \tau$}{
            \Return{$\{t_k\}_{k = 1,2,\dots,n}$}\;
    }{
    \Return{$\{t_k\}_{k = 1,2,\dots,n-1}$}\;
    }
\caption{Thinning Algorithm}   
\label{alg:thinning}
\end{algorithm}

\subsection{Simulation I: Violations of the Noninformative Censoring Assumption}\label{sec:censoringassumptionsim}

As it is known that censoring/dropout can bias model parameters when it is related to the longitudinal outcome \cite{Ma2005}, we investigate the performance of the FIPTIW-GEE under various violations of the noninformative censoring assumption in this simulation study. We also examine how further incorporating IPCW weights into the FIPTIW method affects estimation of the outcome model parameters. We denote the FIPTIW method that also incorporates IPCW weights as the FIPTICW method.

We simulate covariates as in Section \ref{sec:datagen}, and we simulate the probability of being treated as $\pi_i = \text{expit}(\alpha_0 + \alpha_1W)$ where $\alpha_0 = -1$ and $\alpha_1 = 1$. The outcome is simulated as in Equation (\ref{eq:simoutcomenormal}) and we simulate observation times according to Equation (\ref{eq:simintensity}), where we let $\gamma_1 = 0.5, \gamma_2 = 0.3$, and $\gamma_3 = 0.6$. 

In this simulation, we allow the covariates driving the observation times to also be related to the censoring time. We specify the censoring hazard as
\begin{equation}\label{eq:simIcenshazard}
    \lambda_{i,c}(t) = \lambda_{0, c}\exp\{\eta_1D_i + \eta_2W_i + \eta_3Z_i\},
\end{equation}
where $\lambda_{0, c} = 0.1 \times t$ is the baseline censoring hazard. In this simulation, we let $\eta_1 = 0.4$, $\eta_2 = (0, 0.2, 0.5)$, and $\eta_3 = (0, 0.4, 0.6)$ to see how the strength of the relationship between various covariates and the censoring times affects the estimation of the ATE, $\beta_1$, in Equation (\ref{eq:simulationmarginalmodel}). Under this hazard function, when $\eta_2 = \eta_3 = 0$, the noninformative censoring assumption is satisfied. Otherwise, the assumption is violated. 

To simulate the censoring time, we can use the inverse probability method proposed by Bender \cite{Bender2005}. We can use the PH model in Equation (\ref{eq:simIcenshazard}) to construct a survival function as 
\begin{equation*}
    S(t) = \exp\{H_0(t)\times \exp(\eta_1D_i + \eta_2W_i + \eta_3Z_i)\},
\end{equation*}
where $H_0(t) = \int_{0}^t \lambda_{0, c}(u)du$. So long as $\lambda_{0, c} > 0$ at all $t$, then  the survival/censoring time can be expressed as 
\begin{equation*}
C_i = H_0^{-1}\left[-\log(U) \times \exp(-(\eta_1D_i + \eta_2W_i + \eta_3Z_i))\right],
\end{equation*}
where $U \sim Unif(0, 1)$ \cite{Bender2005}. As such, we simulate the censoring time for each individual by randomly drawing $U_i$ from a uniform distribution on (0,1) and calculate
\begin{equation*}
    C_i = \sqrt{\frac{2}{0.1}(-\log(U_i)\times\exp(-\eta_1D_i + -\eta_2W_i + -\eta_3Z_i))}
\end{equation*}
for each individual. 

Under this data generating mechanism, Assumptions \ref{O1}, \ref{O3}, \ref{O4}, and \ref{O5} hold for the observation process, and \ref{T1} to \ref{T5} hold for the treatment assignment process. Assumption \ref{O2} will only hold when  $\eta_2$ and $\eta_3$ are both zero. 

For each combination of $(\eta_1, \eta_2, \eta_3)$, we simulate 1000 data sets for varying sample sizes of $n$ = 50, 100, and 500. For each data set, we estimate the ATE, $\beta_1$, from an unweighted independent GEE and independent GEEs weighted by IPTW, IIW, FIPTIW, and FIPTIW including the IPCW weights (FIPTICW), where the IPCW, IIW, and IPTW weights were multiplied together. The IIW weights (stabilized) are estimated using a Cox PH model, the IPTW weights are estimated by logistic regression, and the IPCW weights are estimated by a second Cox PH model. From each of these models over the 1000 simulated data sets, we calculate the bias and mean squared error (MSE) under each simulation scheme. The IPCW weights are estimated using a Cox PH model. All models are correctly specified. 

The results for $n = 100$ are shown in Figure \ref{fig:censoring_n100}. The results for $n = 50$ and $n = 500$ can be found in Supplementary Material \ref{sec:appendixsimI}, and are summarized below.

\begin{figure}[ht]
    \centering
    \includegraphics[width = \textwidth]{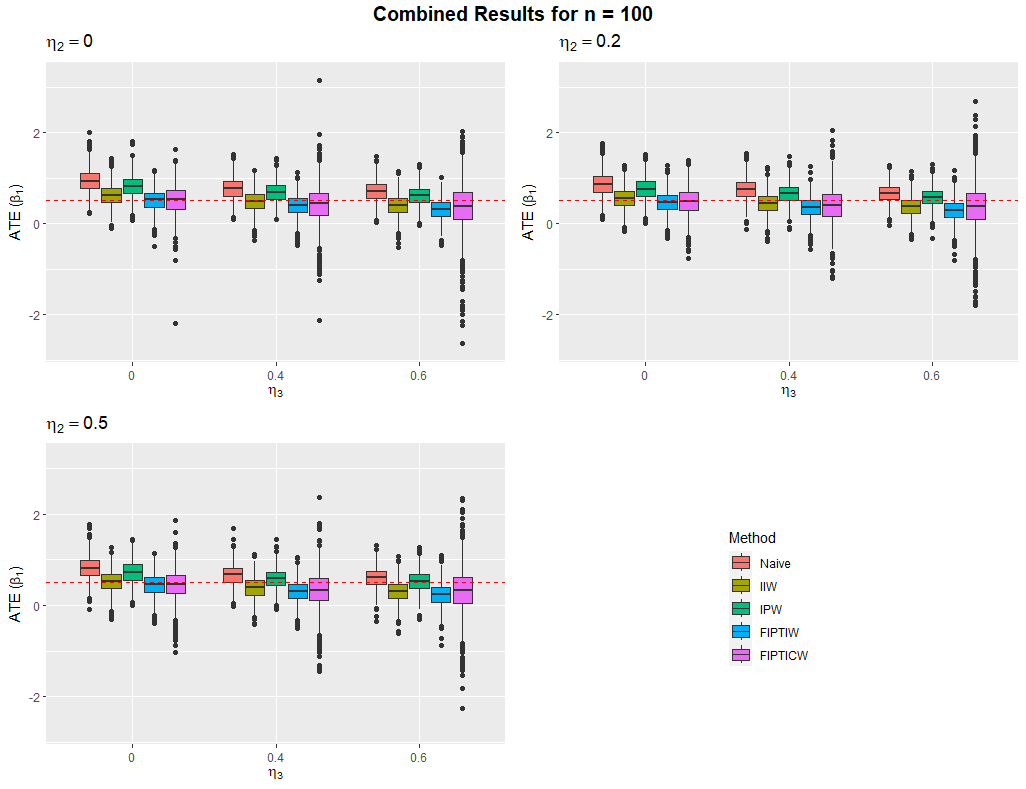}
    \caption{Results of Simulation I for $n$ = 100. Average treatment effect (ATE, $\beta_1$) is calculated by fitting an independent GEE with various weights for each simulation scheme over the 1000 generated data sets. The weighting methods include no weighting (unweighted), inverse intensity weighting (IIW), inverse probability of treatment weighting (IPTW), flexible inverse probability of treatment and intensity weighting (FIPTIW) and flexible inverse probability of treatment weighting with inverse probability of censoring weights included (FIPTICW). The true value of the ATE is 0.5, and is denoted by the red horizontal line.}
    \label{fig:censoring_n100}
\end{figure}

When $n = 100$ and $\eta_1, \eta_2$ and $\eta_3$ are all zero, the noninformative censoring assumption (Assumption \ref{O2}) is satisfied. In this setting, the FIPTIW and FIPTICW methods are unbiased while the other methods have large biases. The FIPTICW method, though showing the smallest bias, has a large variance. When $\eta_1$ is non-zero while $\eta_2$ and $\eta_3$ are zero, the FIPTIW and FIPTICW methods remain unbiased.  However, the FIPTICW method again has large variance. This shows that when the treatment indicator (which is included in the outcome model) is related the censoring hazard, the FIPTIW method can be used to estimate the ATE. 

As the magnitude of $\eta_3$ increases, more bias is introduced into our estimated ATEs under the FIPTIW and FIPTICW methods. In most cases, the bias and variance increase as the the magnitude of $\eta_3$ increases. Interestingly, the bias of the other methods do not tend to increase as $\eta_3$ increases. In fact, we often see the bias and MSE decrease as $\eta_3$ increases under the unweighted, IIW, and IPTW methods. As $\eta_2$ increases, we see the bias and variance also tend to increase for the FIPTIW and FIPTICW methods. Again, the bias decreases as $\eta_2$ increases for the unweighted, IIW, and IPTW methods in many cases.

There are many settings where the FIPTIW and FIPTICW methods had worse performance than the IIW and IPTW methods in terms of both bias and variance. Aside from when $\eta_2$ and $\eta_3$ are zero, the IIW and/or the IPTW methods outperform the FIPTIW and FIPTICW methods in most settings. In some cases, the unweighted method outperforms the FIPTIW and FIPTICW methods. These results are surprising as the IIW and IPTW methods only account for one of the three sources of bias in the data. The unweighted method accounts for none of the three sources of bias. We also note that often the best performing method in terms of bias and variance still results in biased estimation. 

In general, the FIPTICW method tends to have a smaller bias than the FIPTIW method. However, it also tends to have a higher variance and is still biased in many settings as the IIW weights are not adequately estimated. 

We see similar results for $n = 50$ (Figure \ref{fig:censoring_n50} in Supplementary Material \ref{sec:appendixsimI}) with larger variances for the estimated ATEs. We again see that the FIPTIW and FIPTICW methods are often outperformed by the IIW and IPTW methods, and provide biased results for the estimated ATEs. When $n = 500$ (Figure \ref{fig:censoring_n500} in Supplementary Material \ref{sec:appendixsimI}), we see a decrease in the variances of the estimated ATEs. Again, the FIPTIW and FIPTICW methods are often biased and outperformed by the IIW and IPTW methods. 

From these results, we conclude that there is moderate sensitivity to the noninformative censoring assumption for the FIPTIW method. That is, the FIPTIW may result in spurious estimates of the ATE when the noninformative censoring assumption is not satisfied. Further including the IPCW weights into the FIPTIW model does not fully adjust for the bias introduced by the noninformative censoring assumption, not to mention it results in increased variance in most settings. As such, work is needed to extend existing methodology to allow for the consideration of informative censoring in the analysis.

\subsection{Simulation II: Variable Inclusion in Inverse Intensity Weighting Models}\label{sec:varinclusion}

To examine which variables should be included in IIW models, we perform a simulation study similar to Brookhart et al. \cite{Brookhart2006}.

In this simulation, we simulate the covariates as in  Section \ref{sec:datagen} where the probability of being treated is $\pi = 0.5$. That is, the treatment assignment is randomized with equal likelihood of treatment allocation. We choose to simulate treatment assignments that are completely randomized to focus only on the variable inclusion for IIW, as variable inclusion for IPTW has previously been investigated \cite{Brookhart2006, Zhu2015}. We simulate the outcome as in Equation (\ref{eq:simoutcomenormal}). We simulate the observation times using the intensity model in Equation (\ref{eq:simintensity}) where we let $\gamma_1 = 0.5$, $\gamma_2 = \{0, 0.3\}$, and $\gamma_3 = 0.6$. Under this mechanism, assumptions \ref{O1}, \ref{O2}, \ref{O3}, and \ref{O5} hold. Assumption \ref{O4} will hold if all covariates are included in the intensity model (i.e. if the model is correctly specified).  As the treatment assignment was randomized with equal likelihood, \ref{T1} to \ref{T5} implicitly hold. 

Under this data generating mechanism, we only need to fit an IIW model to adjust for the conditionally ignorable observation process as the treatment assignment is fully randomized. The covariate $Z$ is related to the observation intensity, treatment, and the longitudinal outcome, making it a confounder for the estimation of the ATE. When $\beta_2 = 0$ and $\gamma_2 = 0$, the covariate $G(t)$ is not related to the observation intensity or longitudinal outcome. When $\beta_2 = 2$ and $\gamma_2 = 0$, the covariate $G(t)$ is only related to the longitudinal outcome. When $\beta_2 = 0$ and $\gamma_2 = 0.3$, the covariate $G(t)$ is only related to the observation intensity. When $\beta_2 = 2$ and $\gamma_2 = 0.3$, $G(t)$ is related to both processes but is not a confounder as it is not related to the treatment assignment. 

We consider sample sizes of $n = 50, 100,$ and $500$. For each sample size, we generate 1000 data sets and fit seven models for the observation intensity using all possible combinations of $D$, $G(t),$ and $Z$ as covariates. For each of these models, we obtain estimated stabilized IIW weights by fitting a Cox PH model. We then use these estimated weights in an independent GEE to obtain an estimate of $\beta_1$, the ATE. We also fit an unweighted model for comparison. From the 1000 data sets, we calculate the bias, variance, and mean squared error (MSE) of $\beta_1$ for each possible combination of covariates in the weighting model. The results for $n = 100$ are shown in Table \ref{tab:simII_n100}. 

\begin{table}[ht]
    \centering
    \begin{tabular}{ll|lllllllll}
\multicolumn{2}{c}{} & \multicolumn{9}{c}{\textbf{Variables used to estimate intensity}} \\
$\boldsymbol{\gamma_2}$ & $\boldsymbol{\beta_2}$ &  & \textbf{Naive} & $\boldsymbol{D}$ & $\boldsymbol{G(t)}$ & $\boldsymbol{Z}$ & $\boldsymbol{D,G(t)}$ & $\boldsymbol{D,Z}$ & $\boldsymbol{G(t),Z}$ & $\boldsymbol{D,G(t),Z}$\\
\hline
0 & 0 &  &  &  &  &  &  &  &  & \\
 &  & Bias: & 0.286 & 0.286 & 0.285 & 0.082 & 0.285 & 0.026 & 0.082 & 0.026\\
 &  & MSE: & 0.102 & 0.102 & 0.102 & 0.024 & 0.102 & 0.017 & 0.024 & 0.017\\
0 & 2 &  &  &  &  &  &  &  &  & \\
 &  & Bias: & 0.287 & 0.287 & 0.289 & 0.085 & 0.286 & 0.029 & 0.083 & 0.029\\
 &  & MSE: & 0.112 & 0.112 & 0.116 & 0.032 & 0.111 & 0.026 & 0.031 & 0.026\\
0.3 & 0 &  &  &  &  &  &  &  &  & \\
 &  & Bias: & 0.299 & 0.299 & 0.297 & 0.090 & 0.297 & 0.033 & 0.089 & 0.032\\
 &  & MSE: & 0.110 & 0.110 & 0.108 & 0.025 & 0.108 & 0.018 & 0.025 & 0.018\\
0.3 & 2 &  &  &  &  &  &  &  &  & \\
 &  & Bias: & 0.383 & 0.383 & 0.318 & 0.169 & 0.315 & 0.111 & 0.100 & 0.043\\
 &  & MSE: & 0.181 & 0.181 & 0.137 & 0.057 & 0.132 & 0.040 & 0.036 & 0.028\\
\end{tabular}
    \caption{Simulation results for Simulation II for $n = 100$. Bias and mean squared error (MSE) of the average treatment effect (ATE) is calculated by weighting the outcome model in Equation (\ref{eq:simulationmarginalmodel}) by inverse intensity weighting (IIW) for each simulation scheme over the 1000 generated data sets. Variables included in the IIW model are listed in the table. The true value of the ATE is 0.5.}
    \label{tab:simII_n100}
\end{table}

For a sample size of 100, when $\gamma_2 = 0$ and $\beta_2 = 0$, the covariate $G(t)$ is not related to the observation intensity or longitudinal outcome. In this setting, any weighting model that does not include the confounder $Z$ and treatment $D$ are biased. Further including an unrelated covariate $G(t)$ does not impact the bias, variance, and thus the MSE in this setting.

When $\gamma_2 = 0$ and $\beta_2 = 2$, the covariate $G(t)$ is related only to the longitudinal outcome. Again, any weighting model that does not include the confounder $Z$ and treatment $D$ are biased. Further including the covariate only related to the longitudinal outcome, $G(t)$, did not influence the bias or MSE of the ATE. Similar results are seen when $\gamma_2 = 0.3$ and  $\beta_2 = 0$ (the covariate $G(t)$ is only related to the observation intensity). 

When both $\gamma_2$ and $\beta_2$ are non-zero, $G(t)$ is related to both the observation and outcome processes but is unrelated to the treatment assignment. In this setting, the estimate is only unbiased when the weighting model includes all three covariates.

The results for $n = 50$ and $n = 500$ can be found in Tables \ref{tab:simII_n50} and \ref{tab:simII_n500} in Supplementary Material \ref{sec:appendixsimII}. 
For the case of $n = 50$, we observe similar trends and see larger variances of the parameter estimates. Similarly, for $n = 500$ we observe similar trends and see smaller variances of the parameter estimates. 

The results of this simulation highlight the importance of including observation process confounders in the intensity model. Including covariates that do not confound the relationship between the observation times and the outcome did not impact estimation greatly, with no substantial increases in MSE. Specifically, we do not see an increase in variance when including variables that are only related to the observation intensity, which is contrary to what has been shown in the IPTW literature. As such, it is recommended to be conservative (i.e., more inclusive) with the variables included in the intensity model if there is reason to believe they could potentially be related to both the outcome and observation times.

\subsection{Simulation III: Weight Trimming}\label{sec:weighttrimming}

In this simulation, we consider weight trimming under FIPTIW when the IPTW and IIW weights are extreme due to the underlying observation intensity and/or propensity score. We recognize extreme weights can also occur under model misspecification, but limit this analysis to the scenario where the observation intensity and propensity score models are correctly specified.  

We simulate data as in Section \ref{sec:datagen} where the probability of being treated is $\pi = \text{expit}(\alpha_0 + \alpha_1W)$ and $\alpha_0 = -1$ and $\alpha_1$ will vary for each simulation scheme. In this simulation, we generate the outcome as in Equation (\ref{eq:simoutcomenormal}). We simulate the observation times according to the intensity in Equation (\ref{eq:simintensity}), where the parameters $\boldsymbol{\gamma}$ will also vary for each scenario. Under this data generating mechanism, Assumptions \ref{O1} to \ref{O5} and \ref{T1} to \ref{T5} all hold. That is, both the observation and treatment assignment processes are conditionally ignorable. 

We consider scenarios where we have varying degrees of informativeness in the treatment assignment and observation processes. To simulate various strengths of informativeness in the propensity score model, we set $(\alpha_0, \alpha_1)$ to $(0, 0.5), (0, 3.5),$ or $(0, 5.5)$ which correspond to low, moderate, and high degrees of informativeness. To simulate various observation processes, we set $(\gamma_1, \gamma_2, \gamma_3)$ to $(0.5, 0.3, 0.6), (0.5, 0.3, -0.75)$, or $(0.5, 0.3, -1.1)$, which correspond to low, moderate, and high degrees of informativeness. The degrees of informativeness in each process will individually and jointly affect the resultant IPTW, IIW, and FIPTIW weights. We consider the following combinations of the degrees of informativeness for the treatment assignment/observation processes: low/low, moderate/low, high/low, low/moderate, low/high, and moderate/moderate. Under the moderate/high, high/moderate, and high/high scenarios, we end up with propensity score estimates of 0 or 1 which causes the estimation of the ATE to break down. We omit the results of such analyses.

For each simulation scheme, we simulate 1000 data sets with $n = 100$ and calculate the stabilized IIW weights and IPTW weights through a Cox PH and logistic regression model, respectively. We obtain various estimates of the ATE ($\beta_1$) by trimming weights to the $p$th percentile from $p = 0.50$ to $1.00$ in increments of 0.01 for each data set. We consider trimming the individual IIW and IPTW weights prior to multiplying (which we refer to as ``trimming first") and also trimming the FIPTIW weights after multiplying the IIW and IPTW weights together (which we refer to as ``trimming after"). We then aggregate the results and plot the relative bias (RB), variance, and MSE of the estimated ATE over the 1000 data sets for each cut point and trimming method considered.

\begin{table}[h]
\centering
\begin{tabular}{@{}ll|rrr|rrr|rrr@{}}
\multicolumn{2}{c|}{\textbf{\begin{tabular}[c]{@{}c@{}}Degree of \\ Informativeness\end{tabular}}} & \multicolumn{3}{c|}{\textbf{\begin{tabular}[c]{@{}c@{}}Mean Proportion \\ of IPTW Weights\end{tabular}}} & \multicolumn{3}{c|}{\textbf{\begin{tabular}[c]{@{}c@{}}Mean Proportion \\ of IIW Weights\end{tabular}}} & \multicolumn{3}{c}{\textbf{\begin{tabular}[c]{@{}c@{}}Mean Proportion \\ of FIPTIW Weights\end{tabular}}} \\
\hline
\begin{tabular}[c]{@{}l@{}}Treatment \\ Assignment \\ Process\end{tabular} & \begin{tabular}[c]{@{}l@{}}Observation \\ Process\end{tabular} & \textgreater 5 & \textgreater 10 & \textgreater 20 & \textgreater 5 & \textgreater 10 & \textgreater 20 & \textgreater 5 & \textgreater 10 & \textgreater 20 \\ 
\hline
Low & Low & 0.00 & 0.00 & 0.00 & 0.00 & 0.00 & 0.00 & 0.00 & 0.00 & 0.00 \\
Moderate & Low & 0.36 & 0.09 & 0.02 & 0.00 & 0.00 & 0.00 & 0.36 & 0.09 & 0.02 \\
High & Low & 0.61 & 0.22 & 0.08 & 0.00 & 0.00 & 0.00 & 0.61 & 0.22 & 0.08 \\
Low & Moderate & 12.67 & 8.01 & 3.99 & 1.89 & 0.07 & 0.00 & 12.67 & 8.01 & 3.99 \\
Low & High & 13.78 & 6.83 & 4.40 & 6.17 & 0.63 & 0.04 & 13.78 & 6.83 & 4.40 \\
Moderate & Moderate & 12.53 & 3.38 & 1.79 & 10.26 & 1.39 & 0.10 & 12.53 & 3.38 & 1.79\\
\\
\end{tabular} \caption{Distribution of weights under each simulation scheme.}
    \label{tab:trimming_weights}
\end{table}

Table \ref{tab:trimming_weights} shows the average proportion of the estimated weights larger than 5, 10, and 20 prior to trimming across the 1000 simulated data sets. When the observation process is simulated to have a low degree of informativeness, we see the mean proportion of IPTW and FIPTIW weights larger than 5, 10, and 20 monotonically increase as the treatment assignment process becomes more informative. The IIW weights are stable, with a negligible mean proportion of weights larger than 5, 10, and 20. However, when the treatment assignment process is simulated to have a low degree of informativeness and the observation process has a moderate or high degree of informativeness, we see a large mean proportion of extreme IPTW weights despite the underlying treatment assignment process. We see extreme IIW weights to a much lesser extent in these scenarios, despite simulating highly informative observation processes. This result highlights how one process may effect the estimation of weights for another. 

Figure \ref{fig:trimmingresults} shows the results of the estimated ATE for each of the six scenarios considered. Detailed descriptions of the results for each individual simulation scheme can be found in Supplementary Material \ref{sec:appendixsimIII}. The results show that weight trimming is beneficial when extreme IPTW weights are produced based on the underlying treatment assignment process. That is, we found that weight trimming reduced the relative bias of our estimated ATEs when the treatment assignment process had moderate or high degrees of informativeness. Interestingly, when extreme IPTW and IIW weights were produced because of the informativeness in the observation process, weight trimming did not improve the performance of the FIPTIW method. However, in this setting trimming weights above the 95th percentile do not significantly change the results. That is, weight trimming can reduce the overall bias introduced into the estimated ATEs when the treatment assignment process is informative, without a significant change in MSE. Even when weight trimming is not optimal, the difference in bias and MSE between trimming at high percentiles versus not trimming at all are negligible. As such, we recommend trimming FIPTIW weights to the 95th percentile when extreme weights are present. The results showed that at this percentile, the difference in estimation when trimming FIPTIW weights prior to multiplying compared to trimming after multiplying are negligible. Therefore, one can choose to trim weights either before or after multiplying the IIW and IPTW weights together when employing FIPTIW.  

\begin{figure}[ht]
\centering
\includegraphics[width=\textwidth, height = 17cm]{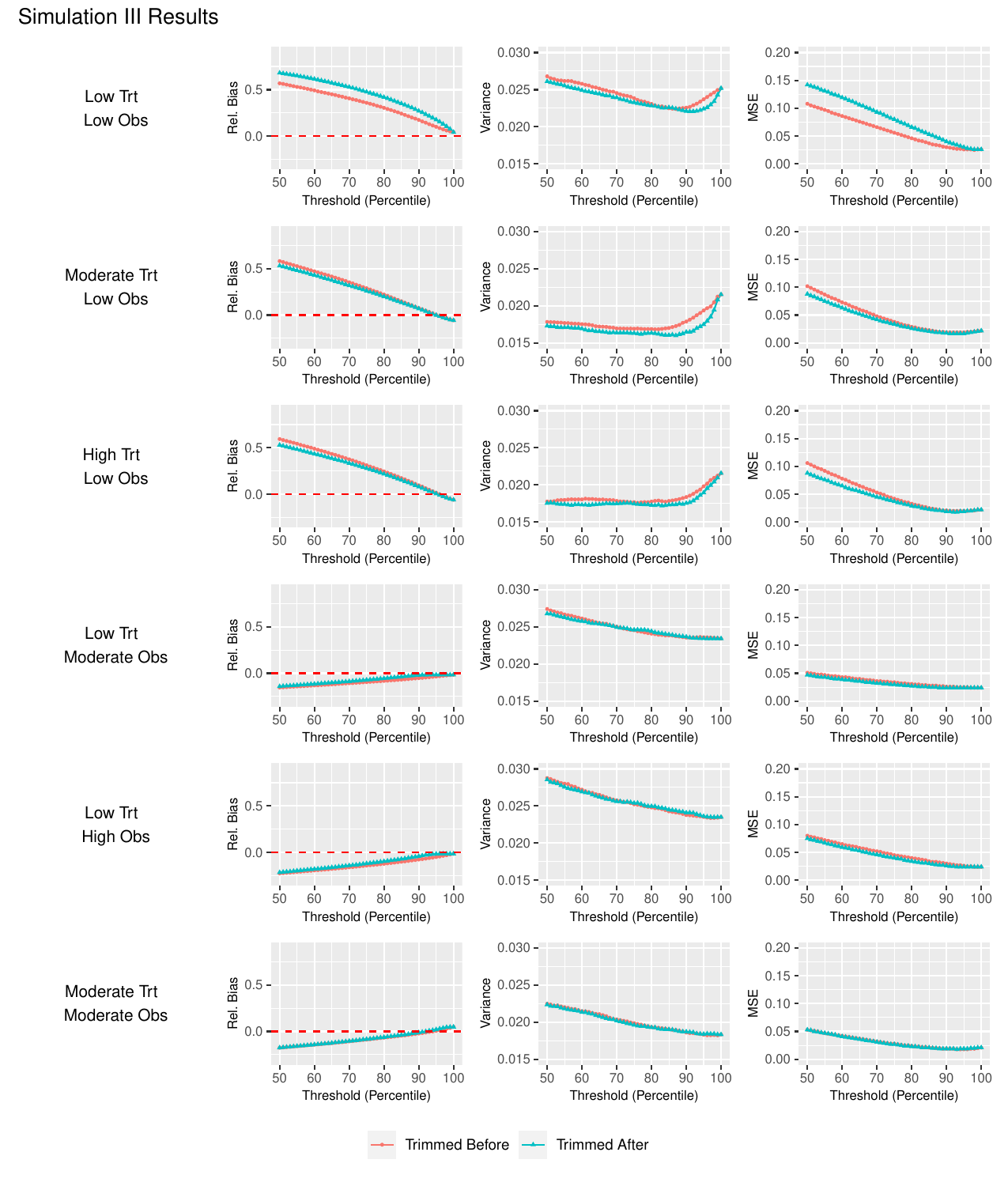}\hfill
\caption{Results of the simulation under various degrees of informativeness (low, moderate, and high) in the treatment assignment (``Trt") and observation (``Obs") processes.  ``Trimmed before" refers to trimming the individual IPTW and IIW weights prior to multiplying and calculating the FIPTIW weights. ``Trimmed after" refers to trimming the FIPTIW weights after multiplying the IPTW and IIW weights together. The true average treatment effect $\beta_1$ is simulated to be 0.5.}
\label{fig:trimmingresults}
\end{figure}

\section{Application on the PRISM Study}\label{sec:FIPTIWdatanalysis}

Malaria is spread through infected female mosquitoes, which require aquatic habitats to breed \cite{shayo2021}.  Larval source management has been used as a malaria prevention strategy to reduce both the densities of infected adult mosquitoes and prevalence of malaria in humans \cite{msellemu2016epidemiology, mwakalinga2018topographic}. A recent study has shown an association between residing in households with non-piped water sources and malaria diagnoses in a sample of 6,707 children in Tarzania. Further, a recent meta-analysis found that unprotected household water sources such as unprotected wells, springs, rivers, dams, and streams were associated with higher prevalence of malaria in young children \cite{yang2020drinking}. We similarly wish to evaluate if there is an increase in the odds of malaria diagnosis for those residing in households with unprotected water sources compared to those with non-piped, protected water sources among children ages 2 to 11 living in Uganda. 

We analyze observational data from the Program for Resistance, Immunology, Surveillance, and Modeling of Malaria in Uganda (PRISM) \cite{kamya2015malaria} using the FIPTIW method to quantify the association between unprotected water sources and malaria diagnoses. In the PRISM study, households in three sub-counties of Uganda were randomly selected to be enrolled in a longitudinal study. All individuals in the household were enrolled in the study if they met the eligibility criteria outlined by Kamya et al. \cite{kamya2015malaria}. Data were collected on individuals enrolled in the study between August 2011 and June 2017 through routine clinical visits roughly every three months. Participants also attended a study clinic (and thus had observations recorded) any time they became ill.

The PRISM study provided demographic information on each study participant, including the individual's age and sex at enrollment, and the individual's weight, height, body temperature, malaria diagnosis, and prescribed treatments at each scheduled and unscheduled assessment. The study also provided information on the household that each patient resides in, including the sub-county in Uganda, dwelling characteristics such as materials, number of sleeping places, waste disposal, household water sources, a categorical and numerical household wealth index, and food security measures such as the number of meals given per day or the number of food problems per week. Household characteristics were matched to the participant through a unique household identifier. Individuals were classified as residing in households with protected water sources if their dwelling sourced drinking water from a protected spring or well, public tap, or borehole. Individuals were classified as having unprotected water sources if their drinking water was sourced from open public wells, rivers and streams,  ponds and lakes, or unprotected springs. This categorization of drinking water sources closely follows that of Yang et al. \cite{yang2020drinking}. We note that drinking water source (along with other household characteristics) did not change during follow-up and are considered time-invariant covariates.  Participants that had drinking water piped into their yard, compound, or dwelling were excluded from the study to focus on the difference within unpiped water sources.

Individuals residing within the same household are likely to be related on biological and environmental factors. To restore the independence assumption between individuals in the study, we randomly selected one child per household to be in the final sample. The final sample size was 287 children, all of whom resided in separate houses. From this sample, 218 children resided in households with protected water sources while 69 resided in households with unprotected water sources. Baseline demographics for other covariates considered in the study are given in Table \ref{tab:malariademog1}.

\begin{table}[h]
    \centering
\begin{tabular}{ll|ll}
\textbf{Covariate }& \textbf{Mean (SD) or n (\%)} & \textbf{Covariate }& \textbf{Mean (SD) or n (\%)}\\
\hline
&&&\\
\textbf{Age at Enrollment} & 5.51 (2.03) & \multicolumn{2}{l}{\textbf{Unprotected Water Source}}\\
\multicolumn{2}{l|}{\textbf{Sex}} & \hspace{1em}No & 218 (75.96\%)\\
\hspace{1em}Female & 145 (50.52\%)& \hspace{1em}Yes & 69 (24.04\%)\\
\hspace{1em}Male & 142 (49.48\%)&\multicolumn{2}{l}{\textbf{Dwelling Type}}\\
\multicolumn{2}{l|}{\textbf{Sub-County}}&\hspace{1em}Modern & 85 (29.62\%)\\
\hspace{1em}Walukuba & 85 (29.62\%)&\hspace{1em}Traditional & 202 (70.38\%)\\
\hspace{1em}Kihihi & 100 (34.84\%)&\multicolumn{2}{l}{\textbf{Food Problems per Week}}\\
\hspace{1em}Nagongera & 102 (35.54\%)&\hspace{1em}Sometimes & 97 (33.8\%)\\
\multicolumn{2}{l|}{\textbf{Household Wealth Index}}&\hspace{1em}Never & 47 (16.38\%)\\
\hspace{1em}Least poor & 86 (29.97\%)&\hspace{1em}Often & 38 (13.24\%)\\
\hspace{1em}Poorest & 103 (35.89\%)&\hspace{1em}Always & 55 (19.16\%)\\
\hspace{1em}Middle & 98 (34.15\%)&\hspace{1em}Seldom & 50 (17.42\%)\\
\multicolumn{2}{l|}{\textbf{Drinking Water Source}}&\multicolumn{2}{l}{\textbf{Waste Facilities}}\\
\hspace{1em}Public tap & 105 (36.59\%)&\hspace{1em}{}Covered pit latrine, no slab & 103 (35.89\%)\\
\hspace{1em}Protected public well & 24 (8.36\%)&\hspace{1em}{}Covered pit latrine with slab & 18 (6.27\%)\\
\hspace{1em}Protected spring & 26 (9.06\%)&\hspace{1em}{}Composting  toilet & 4 (1.39\%)\\
\hspace{1em}Borehole & 63 (21.95\%)&\hspace{1em}{}Uncovered pit latrine, no slab & 112 (39.02\%)\\
\hspace{1em}River/stream & 25 (8.71\%)&\hspace{1em}{}Flush toilet & 8 (2.79\%)\\
\hspace{1em}Open public well & 21 (7.32\%)&\hspace{1em}{}Uncovered pit latrine with slab & 8 (2.79\%)\\
\hspace{1em}Pond/lake & 5 (1.74\%)&\hspace{1em}{}Vip latrine & 3 (1.05\%)\\
\hspace{1em}Unprotected spring & 18 (6.27\%)&\hspace{1em}{}No facility & 31 (10.8\%)\\
\textbf{Number of Persons Living in House} & 6.18 (2.6) & \\
\hspace{1cm}
\end{tabular}
    \caption{Baseline demographics for the 287 children included in the analysis.}
    \label{tab:malariademog1}
\end{table}

As the FIPTIW method has been shown to be sensitive to violations of the noninformative censoring assumption, we employed artificial censoring \cite{Robins1995, robins1992g, joffe2001administrative} to avoid such violations. Data on the reason for study withdrawal, but not the dates of withdrawal, were provided in the data. As such, the true censoring times were unknown. As a proxy for the true censoring time, we considered the last known observation time as the \emph{surrogate censoring time} for each individual. The surrogate censoring time is used to estimate the unknown true censoring time, which are the earliest possible times at which an individual could be right-censored. Using this definition, surrogate censoring times varied from 0 (baseline) to 1,781 days (study completion), and 5.65\% of individuals were potentially censored prior to 6 months. Of this set of individuals, 26.83\% were considered to violate the noninformative censoring assumption (i.e., dropped out of the study due to reaching 11 years of age, withdrawing informed consent, being unable to comply with the study protocol, or being unable to be located for more than 120 days). As we used the smallest possible censoring time for each individual, this means that a maximum of 1.52\% of the the total study population could be considered censored prior to 6 months for reasons possibly related to the longitudinal outcome. We also performed a sensitivity analysis where individuals who had surrogate censoring times prior to 6 months were randomly censored between 0 and 6 months (see Supplementary Material \ref{sec:appendixSArandom}). There were negligible differences in the results of the analysis when random censoring was employed for this small proportion of individuals. As such, we consider this proportion of potentially informative censoring negligible. For this analysis, we use 6 months (182.5 days) as the point at which all individuals are artificially censored to restore the noninformative censoring assumption.

As patients could visit clinics between scheduled follow-ups, the number of observations per participant varied. From the 133 children we included in this sample, the number of observations ranged from 1 to 11, with an average of 4.83 observations per participant. Of the 1,386 total observations, 287 were for enrollment, 525 were scheduled follow-up appointments, and 574 were unscheduled appointments. Of the 1,386 visits, Malaria was diagnosed 179 times (12.9\%), and 111 of 287 patients (38.7\%) had a Malaria diagnosis at some point during the study. 

We wish to estimate the average treatment effect of a time-invariant indicator of whether or not an individual resides in a household with a non-piped water source on malaria diagnoses. That is, we wish to estimate the marginal model
\begin{equation}\label{eq:malariaoutcomemod}
    g(E(Y_i(t)|D_i) = \mu(t) + \beta_1 D_i
\end{equation}
where $Y_i(t)$ is a binary, time-varying indicator of whether or not the patient was diagnosed with malaria at time $t$, $\mu(t)$ is a smooth function of time, $D_i$ is a time-invariant indicator of whether or not the individual resided in a household with unprotected water sources, and $g(\cdot)$ is the logit link function. 

To estimate the average treatment effect, we considered adjusting for factors that may have confounded the relationship between household water sources and malaria diagnoses through the propensity score. To do so, we employed a logistic regression model. We considered age, sub-county, categorical wealth index, number of food problems per week, type of household waste disposal, dwelling type, and number of people living in the house as potential confounders. We note that we also have data on whether the patient was prescribed artmether-lumefantrine, quinine, artesunate, or no malaria medication for each observation. These medications are not preventive against malaria but used to treat it. However, in our sample only no observations had antimalarial medications prescribed within one week of the current observation time, so we omit this covariate from the analysis. After an initial model fit, the sub-county and human waste facility covariates caused inflated variances in the propensity score model parameter estimates. As such, these were omitted from the final model. The resultant propensity score model was used to estimate IPTW weights for each individual at each time. Large IPTW weights were present, where 5.69\% of weights were larger than 5, and 0.29\% of weights were larger than 10. No weights were larger than 20. The maximum IPTW weight was 10.52. This is similar to what was seen in the moderate category for the IPTW weights we defined in the simulation study in Section \ref{sec:weighttrimming}. 

We also considered the factors driving the observation process. Based on the results of Simulation II (Section \ref{sec:appendixsimII}), we included covariates believed to be related to both the probability of being observed and the longitudinal outcome. As such, we allowed the probability of visiting a clinic at any given time to depend on whether or not the individual resided in a household with protected water sources, their age, sub-county, dwelling type, categorical household wealth index, degree of food problems per week, number of persons living in the house, and the individual's malaria status at the last clinic visit. A Cox PH model with these covariates was used to estimate stabilized IIW weights. From this model, the maximum weight was 3.66. This is similar to the low informativeness scenario as defined in Simulation II.

The resultant FIPTIW weights were calculated by multiplying the IPTW and IIW weights together for each individual at each time point. The FIPTIW weights had some extreme values, where 8.44\% of the FIPTIW weights were above 5, 0.51\% were above 10, and the maximum FIPTIW weight was 13.19. This is similar to what was seen when the treatment assignment process was moderately informative and the observation process had low informativeness in Section \ref{sec:weighttrimming}. 

To estimate $\beta_1$ in the outcome model in Equation (\ref{eq:malariaoutcomemod}), the smooth function of time $\mu(t)$ was estimated using a cubic spline with a constant intercept, as in Coulombe et al. \cite{coulombe2021}. The knots of the spline were chosen by finding the tertiles of the time since enrollment. Independent GEEs were fit using various weighting methods (none, IPTW only, IIW only, FIPTIW, trimmed FIPTIW). When employing the trimmed FIPTIW weighting method, we trimmed the FIPTIW weights after multiplication to the 95th percentile. This resulted in FIPTIW weights above 3.83 being trimmed. The results of the analysis for each weighting method are shown in Table \ref{tab:malariaresults1}.  Two sensitivity analyses were also performed for this study. Supplementary Material \ref{sec:appendixSAcensoring} presents a sensitivity analysis where we include individuals who are censored into the analysis. Supplementary Material \ref{sec:appendixSAclustering} presents a sensitivity analysis where we include all children in each household.

\begin{table}[ht]
    \centering
\begin{tabular}{l|rrrrr}
 \textbf{ Weighting Method} & $\boldsymbol{\eta_1}$ & \textbf{SE(}$\boldsymbol{\eta_1}$\textbf{)} & \textbf{95\% CI for} 
 $\boldsymbol{\eta_1}$ & \textbf{Odds Ratio (OR)} & \textbf{95\% CI for OR}\\
 \hline
None & 0.429 & 0.186 & (0.065, 0.793) & 1.536 & (1.067, 2.211)\\
IPTW & 0.327 & 0.168 & (-0.002, 0.657) & 1.387 & (0.998, 1.928)\\
IIW & 0.555 & 0.178 & (0.206, 0.903) & 1.742 & (1.229, 2.468)\\
FIPTIW & 0.424 & 0.168 & (0.095, 0.753) & 1.528 & (1.099, 2.122)\\
FIPTIW (Trimmed) & 0.409 & 0.166 & (0.084, 0.734) & 1.505 & (1.087, 2.083)\\
\vspace{0.1cm}
\end{tabular}
    \caption{Results of the estimation of the odds ratio (OR) of malaria diagnoses for children residing in households with unprotected water sources versus those residing in households with protected water sources ($\beta_1$) by different weighting methods for an independent GEE in the PRISM cohort. Sample does not include those with piped water sources. }
    \label{tab:malariaresults1}
\end{table}

We see the results differed based on the weighting method that was employed in the analysis. Using no weighting method estimated that the odds of being diagnosed with malaria was 1.54 times higher for individuals residing in a household with unprotected water sources (95\% CI: (1.07, 2.21)). This estimate did not account for the non-randomized exposure nor the informative observation process.  Employing only IPTW accounted for the non-randomized exposure, and reduced the estimated odds ratio to 1.39, but this estimate was  insignificant (95\% CI: (1.00, 1.93)). Employing only IIW accounted for the informative observation process and resulted in a larger estimated odds ratio of 1.76 (95\% CI: 1.23, 2.47)). Weighting the independent GEE by FIPTIW accounted for both processes, and estimated the odds ratio to be 1.53 (95\% CI: (1.10, 2.12)), which was statistically significant. However, as some extremity in the FIPTIW weights was seen in the analysis, we should consider trimming methods to reduce the bias of the causal estimates, as in Simulation III. When large FIPTIW weights were trimmed above the 95th percentile, the estimated odds ratio was 1.51 (95\% CI: (1.09, 2.08). This result was also statistically significant, and resulted in a narrower confidence interval than was seen without any weighting. 

For those without piped water sources, the results of this analysis showed a significant difference in the odds of being diagnosed with malaria for households with unprotected water sources when the observation and treatment assignment processes were accounted for in the analysis. As such, we conclude from the analysis that efforts should be made to provide individuals with access to protected water sources, along with other protective measures, to reduce their individual risk of malaria.

\section{Discussion}\label{sec:ch1discussion}

The analysis of irregular longitudinal data may be complicated by non-randomized treatment assignments and/or informative observation processes, particularly in observational data sets. The FIPTIW method can be employed in certain scenarios to account for these sources of bias. However, we have shown the existing methodology is sensitive to violations of the noninformative censoring assumption, as it may result in biased estimates of causal treatment effects when informative censoring is present. Further, the inclusion of IPCW weights into the FIPTIW model does not account for the bias introduced into the outcome model parameter estimates when informative censoring is present. This is because the estimation of the IIW weights rely on the assumption of noninformative censoring. As such, we have identified estimating IIW weights under informative censoring as an important area of research. Jackson et al. \cite{jackson2014relaxing} uses multiple imputation and bootstrapping to estimate the parameters of a Cox PH model under violations of the independent censoring assumption. It may be possible to extend this method to account for dependent censoring in the estimation of IIW weights, which we identify as an area of future work.

Variable inclusion for IIW (and thus FIPTIW) was also investigated. We have shown that omitting covariates that are related to both the observation and outcome processes in the observation intensity models can result in biased estimates of outcome model parameters. Further, we have shown that unlike in propensity score models, the inclusion of additional covariates related only to the observation intensity do not significantly increase the variance of the resultant estimates of outcome model parameters. The results of the simulations have shown that analysts should be conservative (i.e., more inclusive) with the covariates included in intensity models for IIW and FIPTIW, if there is any indication of  possibly being related to both the observation process and longitudinal outcome. 

Weight trimming was shown to reduce the bias of causal estimates for FIPTIW when the treatment assignment process was highly informative and produced extreme weights. Trimming weights around the 95th percentile often reduced the overall bias of the causal estimates in the outcome model without increasing the variance of the estimates greatly. As such, we recommend employing weight trimming when extreme IPTW (and thus FIPTIW) weights are present. There was no significant change in the estimation of the outcome model parameters when trimming FIPTIW weights before or after multiplying the IPTW and IIW weights together. 

In the real data analysis, household level clustering was present. To circumvent this issue, only one individual per household was included in the final sample. To perform a more appropriate analysis where all individuals in each household are included, one could extend the method presented in Pullenayegum et al. \cite{Pullenayegum2021} to incorporate within-household correlations in the FIPTIW method. Further, extending the FIPTIW methodology to allow for informative censoring (as previously discussed) would also allow us to use the full sample from the PRISM study. Both of these items are identified as important areas of future research.


\section*{Acknowledgements}

The authors would also like to thank Lan Wen (University of Waterloo) for her insights on the nomenclature of the assumptions.

\subsection*{Funding Acknowledgement}

G.T. gratefully acknowledges the support from the Ontario Graduate Scholarship program and the Natural Sciences and Engineering Research Council of Canada (NSERC) Canadian Graduate Scholarship program. M.W. was supported by a Canadian Institute of Health Research (CIHR) Discovery Grant. J.A.D. was supported by a Natural Sciences and Engineering Research Council of Canada (NSERC) Discovery Grant.

\subsection*{Conflict of Interest}
The authors report no Conflicts of Interest.

\section*{Data Availability Statement}

The code used to produce the results in the simulation and data application sections are openly available in a GitHub repository at \href{https://github.com/grcetmpk/FIPTIW}{https://github.com/grcetmpk/FIPTIW} and \href{https://github.com/grcetmpk/MalariaFIPTIW}{https://github.com/grcetmpk/MalariaFIPTIW}. The data used for the real data analysis is publicly available at \href{https://clinepidb.org/}{https://clinepidb.org}.


\clearpage
\newpage
\setcounter{table}{0}
\setcounter{section}{0}
\renewcommand{\thetable}{S\arabic{table}}

\setcounter{figure}{0}
\renewcommand{\thefigure}{S\arabic{figure}}
\setcounter{page}{1}

\renewcommand{\thesection}{\Alph{section}}

\fontsize{20}{12}\selectfont
\begin{center}
    \textbf{Supplementary Material}
\end{center}

\vspace{1cm}

\fontsize{10}{12}\selectfont
The following document contains supplementary material for the paper ``On Flexible Inverse Probability of Treatment and Intensity Weighting: Informative Censoring, Variable Inclusion, and Weight Trimming" by Grace Tompkins, Joel A. Dubin, and Michael Wallace.

The proof of Theorem \ref{thm:unbiased} can be found in Section \ref{sec:proofs}. Section \ref{sec:appendixsim} contains additional results for the simulations performed in the main paper. The additional results for Simulations I (sensitivity to violations of the noninformative censoring assumption), II (variable inclusion), and III (weight trimming) can be found in Sections \ref{sec:appendixsimI}, \ref{sec:appendixsimII}, and \ref{sec:appendixsimIII}, respectively. Section \ref{sec:appendixSA} contains sensitivity analyses for the PRISM cohort data analysis presented in the main paper. The sensitivity analysis for violations of the noninformative censoring assumption can be found in Section \ref{sec:appendixSAcensoring}. The sensitivity analysis for violations of the independence assumption (clustering) can be found in Section \ref{sec:appendixSAclustering}. The sensitivity analysis where individuals who had surrogate censoring times prior to 6 months were randomly censored between 0 and 6 months can be found in Section \ref{sec:appendixSArandom}. 

\clearpage\newpage

\setcounter{page}{1}

\section{Proofs}\label{sec:proofs}

\begin{proof}[Proof of Theorem \ref{thm:unbiased}]
Let $w_i^{F}(t;\boldsymbol{\alpha}, \boldsymbol{\gamma}, \pi, h)$ denote the FIPTIW weight for individual $i$ at time $t$. Without loss of generality, we prove this result using stabilized weights for general function $h(\cdot)$ of covariates $\boldsymbol{X}_i(t)$ for the IIW weights and general function $\pi(\cdot)$ of covariates $\boldsymbol{W}_i(t)$ for the IPTW weights. We begin by taking the expectation and applying the law of iterative expectations as

\begin{equation*}\label{eq:FIPTIWgeeproof1}
\begin{aligned}
    &E\{ U(\boldsymbol{\beta}; \boldsymbol{\alpha}, \boldsymbol{\gamma}, \pi, h) \} \\
    &= E \Bigg\{ \sum_{i = 1}^n \int_{0}^{\tau} \boldsymbol{X}_i(t) \left\{ \frac{dg(\mu)}{d\mu}\bigg\rvert_{\mu_i(t; \boldsymbol{\beta})} \right\} ^{-1} v(\mu_i(t; \boldsymbol{\beta}))^{-1} \{Y_i(t) - \mu_i(t; \boldsymbol{\beta}) \}w_i^{F}(t;\boldsymbol{\alpha}, \boldsymbol{\gamma}, \pi, h)dN_i(t) \Bigg\} \\
    &= E \Bigg\{ E\Bigg\{ \sum_{i = 1}^n \int_{0}^{\tau} \boldsymbol{X}_i(t) \left\{ \frac{dg(\mu)}{d\mu}\bigg\rvert_{\mu_i(t; \boldsymbol{\beta})} \right\} ^{-1} v(\mu_i(t; \boldsymbol{\beta}))^{-1}\{Y_i(t) - \mu_i(t; \boldsymbol{\beta}) \} \\
    &\qquad \qquad \times w_i^{F}(t;\boldsymbol{\alpha}, \boldsymbol{\gamma}, \pi, h)dN_i(t) \big\rvert \boldsymbol{X}_i(t) \Bigg \}  \Bigg\}\\
     &= E \Bigg\{ \sum_{i = 1}^n \int_{0}^{\tau} \boldsymbol{X}_i(t) \left\{ \frac{dg(\mu)}{d\mu}\bigg\rvert_{\mu_i(t; \boldsymbol{\beta})} \right\} ^{-1} v(\mu_i(t; \boldsymbol{\beta}))^{-1} \\
     &\qquad \qquad \times E \Big\{ \{Y_i(t) - \mu_i(t; \boldsymbol{\beta}) \}w_i^{F}(t;\boldsymbol{\alpha}, \boldsymbol{\gamma}, \pi, h)dN_i(t) \big\rvert \boldsymbol{X}_i(t)\Big\} \Bigg\}
\end{aligned}
\end{equation*}

To show this expectation is zero, we can show that the conditional expectation nested inside of the full expression is zero. We show this by again using the law of iterative expectations to re-write the conditional expectation as

\begin{equation*}
\begin{aligned}
    &E \Big\{ \{Y_i(t) - \mu_i(t; \boldsymbol{\beta}) \}w_i^{F}(t;\boldsymbol{\alpha}, \boldsymbol{\gamma}, \pi, h)dN_i(t) \big\rvert \boldsymbol{X}_i(t)\Big\} \\
    &= E \Big\{ E\Big[ \{Y_i(t) - \mu_i(t; \boldsymbol{\beta}) \}w_i^{F}(t;\boldsymbol{\alpha}, \boldsymbol{\gamma}, \pi, h)dN_i(t) \big\rvert \boldsymbol{Z}_i(t), \boldsymbol{X}_i(t), \boldsymbol{W}_i, Y_i(t), C_i \ge t \Big] \big\rvert \boldsymbol{X}_i(t)\Big\} \\
    &= E \Big\{ E\Big[ \{Y_i(t) - \mu_i(t; \boldsymbol{\beta}) \}\left( \frac{D_i(t)}{\pi_i(\boldsymbol{W}_i; \alpha)} + \frac{(1 - D_i(t))}{(1 - \pi_i(\boldsymbol{W}_i; \alpha)} \right)\left(\frac{h(\boldsymbol{X}_i(t))}{\exp(\boldsymbol{Z}_i(t)\boldsymbol{\gamma})}\right)\\
    &\qquad \qquad \times dN_i(t)  \big\rvert \boldsymbol{Z}_i(t), \boldsymbol{X}_i(t), \boldsymbol{W}_i, Y_i(t), C_i \ge t \Big] \big\rvert \boldsymbol{X}_i(t)\Big\} \\
    &= E \Big\{  \{Y_i(t) - \mu_i(t; \boldsymbol{\beta}) \} E\Big[\left( \frac{D_i(t)}{\pi_i(\boldsymbol{W}_i; \alpha)} + \frac{(1 - D_i(t))}{(1 - \pi_i(\boldsymbol{W}_i; \alpha))} \right)\left(\frac{h(\boldsymbol{X}_i(t))}{\exp(\boldsymbol{Z}_i(t)\boldsymbol{\gamma})}\right)\\
    & \qquad \qquad \times dN_i(t)  \big\rvert \boldsymbol{Z}_i(t), \boldsymbol{X}_i(t), \boldsymbol{W}_i, Y_i(t), C_i \ge t \Big] \big\rvert \boldsymbol{X}_i(t)\Big\}.
\end{aligned}
\end{equation*}

If we assume the observation and treatment assignment processes are conditionally independent given the outcome, censoring time, and the covariates related to the outcome, probability treatment assignment, and observation intensity, then the conditional expectation can be separated into two multiplicative terms. That is, we can write the expectation above as

\begin{equation*}
\begin{aligned}
    = E \Big\{  \{Y_i(t) - &\mu_i(t; \boldsymbol{\beta}) \} E\Big[\left( \frac{D_i(t)}{\pi_i(\boldsymbol{W}_i; \alpha)} + \frac{(1 - D_i(t))}{(1 - \pi_i(\boldsymbol{W}_i; \alpha))} \right)\big\rvert \boldsymbol{Z}_i(t), \boldsymbol{X}_i(t), \boldsymbol{W}_i, Y_i(t), C_i \ge t \Big] \times \\
    &E\Big[\left(\frac{h(\boldsymbol{X}_i(t))}{\exp(\boldsymbol{Z}_i(t)\boldsymbol{\gamma})}\right)dN_i(t)  \big\rvert \boldsymbol{Z}_i(t), \boldsymbol{X}_i(t), \boldsymbol{W}_i, Y_i(t), C_i \ge t \Big] \big\rvert \boldsymbol{X}_i(t)\Big\}.
\end{aligned}
\end{equation*}

Recall that $D_i(t)$ is a covariate contained in the set $\boldsymbol{X}_i(t)$. As such, we can simplify this expression to be 

\begin{equation*}
\begin{aligned}
    =E \Big\{  \{Y_i(t) - &\mu_i(t; \boldsymbol{\beta}) \} \left( \frac{(D_i(t)}{\pi_i(\boldsymbol{W}_i; \alpha)} + \frac{(1 - D_i(t))}{(1 - \pi_i(\boldsymbol{W}_i; \alpha))} \right) \times  \\
    &\left(\frac{h(\boldsymbol{X}_i(t))}{\exp(\boldsymbol{Z}_i(t)\boldsymbol{\gamma})}\right)E\left[ dN_i(t) \big\rvert \boldsymbol{Z}_i(t), \boldsymbol{X}_i(t), \boldsymbol{W}_i, Y_i(t), C_i \ge t \right] \big\rvert \boldsymbol{X}_i(t)\Big\}.
\end{aligned}
\end{equation*}
As $C_i \ge t$, $dN_i(t)$ = $dN_i^*(t)$, and we can write this quantity as
\begin{equation*}
\begin{aligned}
    =E \Big\{  \{Y_i(t) - &\mu_i(t; \boldsymbol{\beta}) \} \left( \frac{(D_i(t)}{\pi_i(\boldsymbol{W}_i; \alpha)} + \frac{(1 - D_i(t))}{(1 - \pi_i(\boldsymbol{W}_i; \alpha))} \right) \times  \\
    &\left(\frac{h(\boldsymbol{X}_i(t))}{\exp(\boldsymbol{Z}_i(t)\boldsymbol{\gamma})}\right)E\left[ dN_i^*(t) \big\rvert \boldsymbol{Z}_i(t), \boldsymbol{X}_i(t), \boldsymbol{W}_i, Y_i(t), C_i \ge t \right] \big\rvert \boldsymbol{X}_i(t)\Big\}.
\end{aligned}
\end{equation*}
Then under \ref{O1}, we can simplify this expression to
\begin{equation*}
\begin{aligned}
    =E \Big\{  \{Y_i(t) - &\mu_i(t; \boldsymbol{\beta}) \} \left( \frac{(D_i(t)}{\pi_i(\boldsymbol{W}_i; \alpha)} + \frac{(1 - D_i(t))}{(1 - \pi_i(\boldsymbol{W}_i; \alpha))} \right) \times  \\
    &\left(\frac{h(\boldsymbol{X}_i(t))}{\exp(\boldsymbol{Z}_i(t)\boldsymbol{\gamma})}\right)E\left[ dN_i^*(t) \big\rvert \boldsymbol{Z}_i(t) \right] \big\rvert \boldsymbol{X}_i(t)\Big\}.
\end{aligned}
\end{equation*}
and under \ref{O4}, we can further simplify this to
\begin{equation*}
\begin{aligned}
    &=E \Big\{  \{Y_i(t) - \mu_i(t; \boldsymbol{\beta}) \} \left( \frac{D_i(t)}{\pi_i(\boldsymbol{W}_i; \alpha)} + \frac{(1 - D_i(t))}{(1 - \pi_i(\boldsymbol{W}_i; \alpha))} \right)h(\boldsymbol{X}_i(t))\lambda_0(t) \big\rvert \boldsymbol{X}_i(t)\Big\}\\
    &= \left( \frac{D_i(t)}{\pi_i(\boldsymbol{W}_i; \alpha)} + \frac{(1 - D_i(t))}{(1 - \pi_i(\boldsymbol{W}_i; \alpha))} \right)h(\boldsymbol{X}_i(t))\lambda_0(t)E \Big\{  \{Y_i(t) - \mu_i(t; \boldsymbol{\beta}) \}  \big\rvert \boldsymbol{X}_i(t)\Big\}.\\
\end{aligned}
\end{equation*}
 As $E \Big\{  \{Y_i(t) - \mu_i(t; \boldsymbol{\beta}) \}  \big\rvert \boldsymbol{X}_i(t)\Big\} = 0$, we have shown that the expectation of the estimating equation is 0 at $\{\boldsymbol{\beta}, \boldsymbol{\alpha}, \boldsymbol{\gamma}, \pi, h\}$ for any function $h(\cdot)$ of covariates $\boldsymbol{X}_i(t)$. We additionally note that $dN_i(t) = 0$ when the individual is censored ($C_i < t)$.
\end{proof}

\section{Additional Simulation Results}\label{sec:appendixsim}
\subsection{Additional Results for Simulation I}\label{sec:appendixsimI}

\begin{figure}[ht!]
    \centering
    \includegraphics[width = \textwidth]{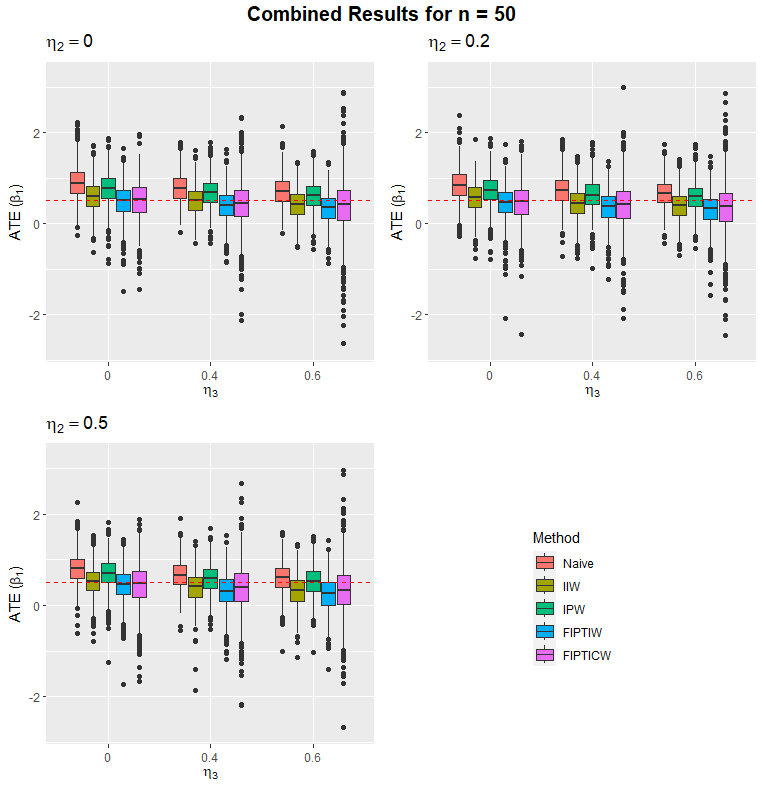}
    \caption{Results of Simulation I for $n$ = 50. Average treatment effect (ATE, $\beta_1$) is calculated by fitting an independent GEE with various weights for each simulation scheme over the 1000 generated data sets. The weighting methods include no weighting (unweighted), inverse intensity weighting (IIW), inverse probability of treatment weighting (IPTW), flexible inverse probability of treatment and intensity weighting (FIPTIW) and flexible inverse probability of treatment weighting with inverse probability of censoring weights included (FIPTICW). The true value of the ATE is 0.5, and is denoted by the red horizontal line.}
    \label{fig:censoring_n50}
\end{figure}

\begin{figure}[ht!]
    \centering
    \includegraphics[width = \textwidth]{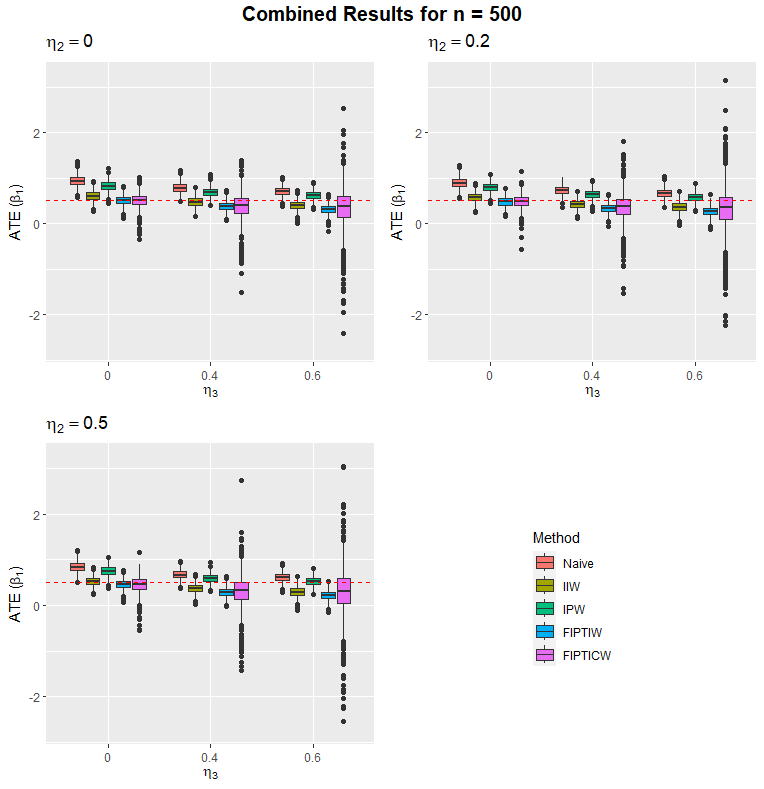}
    \caption{Results of Simulation I for $n$ = 500. Average treatment effect (ATE, $\beta_1$) is calculated by fitting an independent GEE with various weights for each simulation scheme over the 1000 generated data sets. The weighting methods include no weighting (unweighted), inverse intensity weighting (IIW), inverse probability of treatment weighting (IPTW), flexible inverse probability of treatment and intensity weighting (FIPTIW) and flexible inverse probability of treatment weighting with inverse probability of censoring weights included (FIPTICW). The true value of the ATE is 0.5, and is denoted by the red horizontal line.}
    \label{fig:censoring_n500}
\end{figure}

Simulation I was performed using sample sizes of $n = 50$ and $n = 500$. In both settings, when $\eta_1, \eta_2$ and $\eta_3$ are all zero, the noninformative censoring assumption is satisfied. In this setting, the FIPTIW and FIPTICW methods are unbiased while the other methods have large biases for both $n = 50$ and $n = 500$. The FIPTICW method, however, has a large variance. When $\eta_1$ is non-zero while $\eta_2$ and $\eta_3$ are zero, the FIPTIW and FIPTICW methods remain unbiased. 

As the magnitude of $\eta_3$ increases, more bias is introduced into our estimated ATEs under the FIPTIW and FIPTICW methods. In most cases, the bias and variance increase as the the magnitude of $\eta_3$ increases. However, the bias and MSE of the other methods do not tend to increase as $\eta_3$ increases. As $\eta_2$ increases, we see the bias and variance also tend to increase for the FIPTIW and FIPTICW methods. Again, the bias decreases as $\eta_2$ increases for the unweighted, IIW, and IPTW methods in many cases.

The FIPTIW and FIPTICW methods often had worse performance than the IIW and IPTW methods in terms of both bias and variance. Aside from when $\eta_2$ and $\eta_3$ are zero, the IIW and/or the IPTW methods outperform the FIPTIW and FIPTICW methods in most settings. We also note that often the best performing method in terms of bias and variance still results in biased estimation. 

In general, the FIPTICW method tends to have a smaller bias than the FIPTIW method. However, it also tends to have a higher variance and is still biased in many settings as the IIW weights are not adequately estimated. 

From this analysis, we see that there is again moderate sensitivity to the noninformative censoring assumption for the FIPTIW method, and further including the IPCW weights into the FIPTIW model into the model is not recommended. These results are similar to the $n = 100$ case, where variances are smaller for $n = 500$ and larger for $n = 50$.

\subsection{Additional Results for Simulation II}\label{sec:appendixsimII}

\begin{table}[ht]
    \centering
    \begin{tabular}{ll|lllllllll}
\multicolumn{2}{c}{} & \multicolumn{9}{c}{\textbf{Variables used to estimate intensity}} \\
$\boldsymbol{\gamma_2}$ & $\boldsymbol{\beta_2}$ &  & \textbf{Naive} & $\boldsymbol{D}$ & $\boldsymbol{G(t)}$ & $\boldsymbol{Z}$ & $\boldsymbol{D,G(t)}$ & $\boldsymbol{D,Z}$ & $\boldsymbol{G(t),Z}$ & $\boldsymbol{D,G(t),Z}$\\
\hline
0 & 0 &  &  &  &  &  &  &  &  & \\
 &  & Bias: & 0.277 & 0.277 & 0.275 & 0.079 & 0.274 & 0.024 & 0.079 & 0.024\\
 &  & MSE: & 0.118 & 0.118 & 0.117 & 0.04 & 0.117 & 0.034 & 0.04 & 0.034\\
0 & 2 &  &  &  &  &  &  &  &  & \\
 &  & Bias: & 0.281 & 0.281 & 0.284 & 0.081 & 0.281 & 0.026 & 0.081 & 0.027\\
 &  & MSE: & 0.142 & 0.142 & 0.151 & 0.063 & 0.144 & 0.057 & 0.062 & 0.058\\
0.3 & 0 &  &  &  &  &  &  &  &  & \\
 &  & Bias: & 0.298 & 0.298 & 0.297 & 0.093 & 0.297 & 0.035 & 0.094 & 0.036\\
 &  & MSE: & 0.133 & 0.133 & 0.131 & 0.046 & 0.131 & 0.038 & 0.046 & 0.038\\
0.3 & 2 &  &  &  &  &  &  &  &  & \\
 &  & Bias: & 0.361 & 0.361 & 0.297 & 0.154 & 0.291 & 0.097 & 0.085 & 0.03\\
 &  & MSE: & 0.199 & 0.199 & 0.159 & 0.08 & 0.149 & 0.063 & 0.059 & 0.053\\
\end{tabular}
    \caption{Simulation results for Simulation II for $n = 50$.  Bias and mean squared error (MSE) of the average treatment effect (ATE) is calculated by weighting the outcome model in Equation (\ref{eq:simulationmarginalmodel}) by inverse intensity weighting (IIW) for each simulation scheme over the 1000 generated data sets. Variables included in the IIW model are listed in the table. The true value of the ATE is 0.5.}
    \label{tab:simII_n50}
\end{table}

\begin{table}[ht]
    \centering
    \begin{tabular}{ll|lllllllll}
\multicolumn{2}{c}{} & \multicolumn{9}{c}{\textbf{Variables used to estimate intensity}} \\
$\boldsymbol{\gamma_2}$ & $\boldsymbol{\beta_2}$ &  & \textbf{Naive} & $\boldsymbol{D}$ & $\boldsymbol{G(t)}$ & $\boldsymbol{Z}$ & $\boldsymbol{D,G(t)}$ & $\boldsymbol{D,Z}$ & $\boldsymbol{G(t),Z}$ & $\boldsymbol{D,G(t),Z}$\\
\hline
0 & 0 &  &  &  &  &  &  &  &  & \\
 &  & Bias: & 0.297 & 0.297 & 0.297 & 0.084 & 0.297 & 0.027 & 0.084 & 0.027\\
 &  & MSE: & 0.093 & 0.093 & 0.093 & 0.011 & 0.093 & 0.004 & 0.011 & 0.004\\
0 & 2 &  &  &  &  &  &  &  &  & \\
 &  & Bias: & 0.298 & 0.298 & 0.298 & 0.085 & 0.298 & 0.028 & 0.085 & 0.029\\
 &  & MSE: & 0.095 & 0.095 & 0.096 & 0.012 & 0.095 & 0.006 & 0.012 & 0.006\\
0.3 & 0 &  &  &  &  &  &  &  &  & \\
 &  & Bias: & 0.299 & 0.299 & 0.299 & 0.081 & 0.299 & 0.023 & 0.081 & 0.022\\
 &  & MSE: & 0.094 & 0.094 & 0.093 & 0.01 & 0.093 & 0.004 & 0.01 & 0.004\\
0.3 & 2 &  &  &  &  &  &  &  &  & \\
 &  & Bias: & 0.38 & 0.38 & 0.315 & 0.162 & 0.314 & 0.103 & 0.092 & 0.033\\
 &  & MSE: & 0.151 & 0.151 & 0.106 & 0.032 & 0.104 & 0.016 & 0.013 & 0.006\\
\end{tabular}
    \caption{Simulation results for Simulation II for $n = 500$.  Bias and mean squared error (MSE) of the average treatment effect (ATE) is calculated by weighting the outcome model in Equation (\ref{eq:simulationmarginalmodel}) by inverse intensity weighting (IIW) for each simulation scheme over the 1000 generated data sets. Variables included in the IIW model are listed in the table. The true value of the ATE is 0.5.}
    \label{tab:simII_n500}
\end{table}

When $\gamma_2 = 0$ and $\beta_2 = 0$, the covariate $G(t)$ is not related to the observation intensity or longitudinal outcome and any weighting model that does not include the confounder $Z$ and treatment $D$ are biased. Further including an unrelated covariate $G(t)$ does not impact the MSE in this setting.

When $\gamma_2 = 0$ and $\beta_2 = 2$, the covariate $G(t)$ is related only to the longitudinal outcome. Again, any weighting model that does not include the confounder $Z$ and treatment $D$ are biased. Further including the covariate only related to the longitudinal outcome, $G(t)$, did not influence the bias or MSE of the ATE. Similar results are seen when $\gamma_2 = 0.3$ and  $\beta_2 = 0$ (the covariate $G(t)$ is only related to the observation intensity). 

When both $\gamma_2$ and $\beta_2$ are non-zero, $G(t)$ is related to both the observation and outcome processes but is unrelated to the treatment assignment. In this setting, the estimate is only unbiased when the weighting model includes all three covariates.

This simulation shows the importance of including observation process confounders in the intensity model. Including covariates that do not confound the relationship between the observation times and the outcome did not impact estimation greatly, with no substantial increases in MSE.

\subsection{Additional Results for Simulation III}\label{sec:appendixsimIII}

When the underlying processes both have low degrees of informativeness, we see low mean proportions of average weights and the relative bias monotonically decreases as the percentile at which we trim extreme weights increases. That is, we see the smallest relative bias when no trimming is performed, which corresponds to trimming at the 100th percentile (RB = 0.042). However, the variance is minimized at the 91st percentile when weights are trimmed individually prior to multiplication. In terms of MSE, trimming weights prior to multiplication at the 98th percentile results in the smallest MSE (MSE = 0.025). When no trimming is performed, the MSE is only slightly larger at 0.026.

When the treatment assignment process has moderate informativeness and the informativeness in the observation process remains low, the relative bias is no longer minimized when no trimming is performed. The relative bias is minimized when trimming the FIPTIW weights after multiplication at the 95th percentile (RB = 0.000) and the MSE is minimized when trimming the FIPTIW weights after multiplication between the 91st to 95th percentiles (MSE = 0.017). Under no trimming, the relative bias is -0.056 and the MSE is 0.022. When the treatment assignment has high informativeness where the relative bias is minimized when trimming the FIPTIW weights before multiplication at the 96th percentile (RB = 0.000) and the MSE is minimized when trimming the FIPTIW weights after multiplication at the 94th or 95th percentiles (MSE = 0.018). In this simulation, without trimming the relative bias is -0.060 and the MSE is 0.022. These results show that weight trimming can be used to reduce the overall bias of our estimated ATE, but there is not a large difference in the overall MSE when trimming at the optimal threshold versus not trimming. 

When the observation process has moderate or high degrees of informativeness, we see more extremity in both the IIW and IPTW (and thus FIPTIW) weights. The shapes of the relative bias curves differ greatly from the first three bias plots in Figure \ref{fig:trimmingresults}. We see the relative bias, variance, and MSE curves are more similar between trimming before and after methods. When the treatment assignment process has low informativeness and the observation process has moderate informativeness, the relative bias is minimized by either trimming the FIPTIW weights before multiplying between the 93rd and 100th percentiles, or trimming the FIPTIW weights after multiplying at the 100th percentile (RB = -0.016). In this setting, the variance tends to decrease as the threshold percentile increases. The minimum MSE is obtained when trimming the FIPTIW weights before multiplying between the 97th and 100th percentiles, or trimming the FIPTIW weights after multiplying between the 88th and 100th percentiles (MSE = 0.024). We see similar results when the observation process is highly informative where the minimum relative bias is achieved when trimming the FIPTIW weights prior to multiplying at \ 100th percentile, or trimming the FIPTIW weights after multiplying between the 96th and 100th percentiles (RB = -0.015). The MSE is minimized when trimming the FIPTIW weights before multiplying between the 98th and 100th percentiles or trimming the FIPTIW weights after multiplying between the 93rd and 100th percentiles (MSE = 0.024). In this setting, performing no weight trimming is one of the optimal strategies in terms of both relative bias and MSE. 

When both processes are moderately informative, the bias is minimized when trimming weights before multiplication at the 94th percentile, or after multiplication at the 92nd percentile (RB = 0.000). Without weight trimming, the relative bias is 0.049. Again, we see the variance tend to decrease as the trimming threshold increases. As such, the MSE is minimized when trimming the FIPTIW weights before multiplication between the 93rd to 96th percentiles, or when trimming the FIPTIW weights after multiplication at the 93rd percentile (MSE = 0.018). Without weight trimming the MSE is 0.021. Again, these results show that weight trimming may reduce the overall bias of our estimated ATE, but there is not a large difference in the overall MSE when trimming at the optimal threshold compared to no trimming.

\section{Sensitivity Analyses for the Malaria Data Set}\label{sec:appendixSA}

\subsection{Ignoring Censoring}\label{sec:appendixSAcensoring}
The analysis in Section \ref{sec:FIPTIWdatanalysis} was repeated without the use of artificial censoring. The same 287 individuals were included in the study, but we considered all observations in the data set. In this data set, 69.34\% of individuals were censored for nonignorable reasons (i.e., reasons potentially related to the longitudinal outcome). The number of observations per individual ranged from 1 to 124, with a mean number of observations of 33.19. 

The variables included in the propensity score model were the same as in Section \ref{sec:FIPTIWdatanalysis}. The maximum IPTW weight was 11.35, and 4.67\% of IPTW weights were above 5, and 0.21\% above 10. The same variables were included in the intensity model as in Section \ref{sec:FIPTIWdatanalysis}, and the maximum IIW weight was 11.16. 2.78\% of weights were above 5, 0.71\% of weights were above 10, and 0.01\% were above 20. After multiplication, the maximum FIPTIW weight was 36.22.  24.30\% of FIPTIW weights were above 5, 6.65\% were above 10, and 1.37\% were above 20.  When employing weight trimming, any FIPTIW weight above 12.03 was trimmed.

\begin{table}[]
    \centering
    \begin{tabular}{llll|ll}
\textbf{Weighting Method} & $\eta_1$ & \textbf{SE}($\eta_1$) & \textbf{95\% CI for} $\eta_1$ & \textbf{Odds Ratio (OR) }& \textbf{95\% CI for OR}\\
  \hline
None & 0.730 & 0.061 & (0.611, 0.848) & 2.075 & (1.843, 2.336)\\
IPTW & 0.613 & 0.058 & (0.5, 0.726) & 1.846 & (1.649, 2.067)\\
IIW & 0.911 & 0.061 & (0.791, 1.032) & 2.487 & (2.206, 2.806)\\
FIPTIW & 0.773 & 0.059 & (0.656, 0.889) & 2.166 & (1.928, 2.433)\\
FIPTIW (Trimmed) & 0.784 & 0.059 & (0.669, 0.900) & 2.190 & (1.952, 2.458)\\
\\
\end{tabular}
    \caption{Results of the estimation of the odds ratio (OR) of malaria diagnoses for those residing in households with unprotected water sources versus those residing in households with protected water sources ($\beta_1$) by different weighting methods for an independent GEE in the PRISM cohort. Sample includes those who were censored, and likely violates the noninformative censoring violation.}
    \label{tab:results_malaria_SA_censoring}
\end{table}

The results are given in Table \ref{tab:results_malaria_SA_censoring}. The untrimmed FIPTIW model estimates an 2.17 times increase in the odds of malaria diagnosis for those residing in households with unprotected versus protected water sources. The result is similar, but slightly larger, when weight trimming is employed. When comparing to the analysis in Section \ref{sec:FIPTIWdatanalysis}, the estimates are much larger. We again see sensitivity to violations of the noninformative censoring assumption, as in Simulation II in Section \ref{sec:appendixsimII}. As such, analysts must be cautious when individuals who were censored for reasons that may be related to the longitudinal outcome are included in the analysis.

\subsection{Ignoring Clustering}\label{sec:appendixSAclustering}

The analysis performed in Section \ref{sec:FIPTIWdatanalysis} was repeated with the inclusion of all individuals in each family. This resulted in a sample size of 725 children in 287 unique households. Only 1.5\% of individuals were censored for reasons potentially related to the longitudinal outcome when artificial censoring at 6 months was used. The baseline demographics for this sample are given in Table \ref{tab:malariaSAclustering}. 

The variables included in the propensity score model were the same as those described in Section \ref{sec:FIPTIWdatanalysis}. The maximum IPTW weight was 15.11, where 6.63\% of IPTW weights were above 5 and 0.17\% were above 10.. The same variables were included in the intensity model as in Section \ref{sec:FIPTIWdatanalysis}, and the maximum IIW weight was 4.15. After multiplying the weights together, the maximum FIPTIW weight was 18.14. 8.95\% of FIPTIW weights were above 5, and 2.01\% were above 10. When employing weight trimming, any FIPTIW weight above 6.73 was trimmed.

\begin{table}[ht]
    \centering
    \begin{tabular}{ll|ll}
\textbf{Covariate} & \textbf{Mean (SD) or n (\%)} & \textbf{Covariate} & \textbf{Mean (SD) or n (\%)}\\
\hline
&&&\\
\textbf{Age at Enrollment} & 5.56 (2.13)&\multicolumn{2}{l}{\textbf{Unprotected Water Source}}\\
\multicolumn{2}{l|}{\textbf{Sex}}&\hspace{1em}No & 542 (74.76\%)\\
\hspace{1em}Female & 358 (49.38\%)&\hspace{1em}Yes & 183 (25.24\%)\\
\hspace{1em}Male & 367 (50.62\%)&\multicolumn{2}{l}{\textbf{Dwelling Type}}\\
\multicolumn{2}{l|}{\textbf{Sub-County}}&\hspace{1em}Modern & 206 (28.41\%)\\
\hspace{1em}Walukuba & 198 (27.31\%)&\hspace{1em}Traditional & 519 (71.59\%)\\
\hspace{1em}Kihihi & 272 (37.52\%)&\multicolumn{2}{l}{\textbf{Food Problems per Week}}\\
\hspace{1em}Nagongera & 255 (35.17\%)&\hspace{1em}Sometimes & 262 (36.14\%)\\
\multicolumn{2}{l}{\textbf{Household Wealth Index}}&\hspace{1em}Never & 115 (15.86\%)\\
\hspace{1em}Least poor & 235 (32.41\%)&\hspace{1em}Often & 100 (13.79\%)\\
\hspace{1em}Poorest & 257 (35.45\%)&\hspace{1em}Seldom & 114 (15.72\%)\\
\hspace{1em}Middle & 233 (32.14\%)&\hspace{1em}Always & 134 (18.48\%)\\
\multicolumn{2}{l|}{\textbf{Drinking Water Source}}&\multicolumn{2}{l}{\textbf{Waste Facilities}}\\
\hspace{1em}Public tap & 249 (34.34\%)&\hspace{1em}{}Covered pit latrine, no slab & 244 (33.66\%)\\
\hspace{1em}Protected public well & 63 (8.69\%)&\hspace{1em}{}Covered pit latrine with slab & 44 (6.07\%)\\
\hspace{1em}River/stream & 68 (9.38\%)&\hspace{1em}{}Composting  toilet & 9 (1.24\%)\\
\hspace{1em}Protected spring & 61 (8.41\%)&\hspace{1em}{}Uncovered pit latrine, no slab & 298 (41.1\%)\\
\hspace{1em}Borehole & 169 (23.31\%)&\hspace{1em}{}Flush toilet & 21 (2.9\%)\\
\hspace{1em}Open public well & 61 (8.41\%)&\hspace{1em}{}Uncovered pit latrine with slab & 15 (2.07\%)\\
\hspace{1em}Pond/lake & 15 (2.07\%)&\hspace{1em}{}Vip latrine & 8 (1.1\%)\\
\hspace{1em}Unprotected spring & 39 (5.38\%)&\hspace{1em}{}No facility & 86 (11.86\%)\\
\textbf{Number of Persons Living in House} & 6.75 (2.93)&\\
\\
\end{tabular}
    \caption{Baseline demographics for the 725 children when we allow multiple children from the same household to be included in the analysis.}
    \label{tab:malariaSAclustering}
\end{table}

\begin{table}[ht]
    \centering
    \begin{tabular}{l|rrrrr}
  \textbf{Weighting Method} & $\boldsymbol{\eta_1}$ & \textbf{SE(}$\boldsymbol{\eta_1}$\textbf{)} & \textbf{95\% CI for} $\boldsymbol{\eta_1}$ & \textbf{Odds Ratio (OR)} & \textbf{95\% CI for OR} \\
  \hline
None & 0.456 & 0.110 & (0.240, 0.673) & 1.578 & (1.271, 1.959)\\
IPTW & 0.335 & 0.102 & (0.135, 0.535) & 1.398 & (1.145, 1.707)\\
IIW & 0.498 & 0.107 & (0.288, 0.709) & 1.645 & (1.333, 2.032)\\
FIPTIW & 0.375 & 0.100 & (0.180, 0.570) & 1.455 & (1.197, 1.768)\\
FIPTIW (Trimmed) & 0.360 & 0.100 & (0.164, 0.556) & 1.433 & (1.178, 1.743)\\
\end{tabular}
    \caption{Results of the estimation of the odds ratio (OR) of malaria diagnoses for those residing in households with unprotected water sources versus those residing in households with protected water sources ($\beta_1$) by different weighting methods for an independent GEE in the PRISM cohort. Sample includes those who resided in the same household, and likely violates the independence assumption.}
    \label{tab:results_malaria_SA_clustering}
\end{table}

The results are given in Table \ref{tab:results_malaria_SA_clustering}. Based on the distribution of the weights, we see some extremity and recommend weight trimming in this setting. The untrimmed FIPTIW model estimates the odds of malaria diagnosis for those residing in households with unprotected versus protected water sources is 1.45 times greater. The trimmed FIPTIW OR estimate is similar in magnitude (OR = 1.43). The estimated effect sizes are slightly smaller than in the analysis presented in Section \ref{sec:FIPTIWdatanalysis}. There appears to be some sensitivity to violations of the independence assumption and as such, we identify extending existing methodology to account for the correlations between individuals as an area of future work.

\subsection{Employing Random Censoring for Individuals Censored Prior to 6 Months}\label{sec:appendixSArandom}

The analysis performed in Section \ref{sec:FIPTIWdatanalysis} was repeated where individuals who were censored prior to 6 months were randomly censored between 0 and 6 months, instead of at 6 months exactly. 

The same 287 children were included as in the original analysis. However, the distribution of observation times varied slightly due to the introduction of random sampling for a small sample of individuals who were likely censored prior to 6 months. The number of observations per individual ranged from 1 to 11, with an average of 4.78 clinic visits. Of the 1,373 observations, 287 were for enrollment, 516 were scheduled, and 570 were unscheduled. Malaria was diagnosed in 13.03\% of observations. 38.68\% of patients had a Malaria diagnosis at some point. 

The same covariates in the propensity score (IPW) and observation intensity (IIW) models were used in this analysis as described in Section \ref{sec:FIPTIWdatanalysis}. Table \ref{tab:results_malaria_SA_randomcensoring} show the results of the analysis. 

\begin{table}[ht]
    \centering
    \begin{tabular}{l|rrrrr}
  \textbf{Weighting Method} & $\boldsymbol{\eta_1}$ & \textbf{SE(}$\boldsymbol{\eta_1}$\textbf{)} & \textbf{95\% CI for} $\boldsymbol{\eta_1}$ & \textbf{Odds Ratio (OR)} & \textbf{95\% CI for OR} \\
  \hline
None & 0.423 & 0.186 & (0.059, 0.786) & 1.527 & (1.060, 2.196)\\
IPTW & 0.323 & 0.168 & (-0.007, 0.652) & 1.381 & (0.993, 1.919)\\
IIW & 0.558 & 0.178 & (0.210, 0.906) & 1.747 & (1.233, 2.475)\\
FIPTIW & 0.428 & 0.168 & (0.099, 0.757) & 1.534 & (1.104, 2.132)\\
FIPTIW (Trimmed) & 0.410 & 0.166 & (0.084, 0.736) & 1.507 & (1.088, 2.087)\\
\\
\end{tabular}
    \caption{Results of the estimation of the odds ratio (OR) of malaria diagnoses for those residing in households with unprotected water sources versus those residing in households with protected water sources ($\beta_1$) by different weighting methods for an independent GEE in the PRISM cohort. In this analysis, those who were assumed to have been censored prior to 6 months were randomly censored between 0 and 182.5 days.}
    \label{tab:results_malaria_SA_randomcensoring}
\end{table}

The results of this analysis show are similar to those shown in Table \ref{tab:malariaresults1}, showing very little sensitivity to including a small proportion of individuals who dropped out of the study (potentially for reasons related to the longitudinal outcome) before the study end date.

\end{document}